
\documentclass[journal]{IEEEtran}
\ifCLASSINFOpdf
   \usepackage[pdftex]{graphicx}
\else
\fi
%
%

\usepackage{cite}

%
\usepackage{amsmath}

\usepackage{amsthm}

\usepackage{algorithm}

\usepackage{algpseudocode}

\usepackage{subfigure}
%
\usepackage{bm}

\newtheorem{define}{\textbf{Definition}}
\newtheorem{theorem}{\textbf{Theorem}}
\newtheorem{problem}{\textbf{Problem}}

\usepackage{url}



\hyphenation{op-tical net-works semi-conduc-tor}

\begin{document}
%
\title{Pricing and Budget Allocation for IoT Blockchain with Edge Computing}
%
%
%

\author{Xingjian~Ding,
       Jianxiong~Guo,
        Deying~Li,
        and~Weili~Wu,~\IEEEmembership{Senior Member,~IEEE}
\thanks{This work is supported by the National Natural Science Foundation of China (Grant NO.11671400, 61972404, 61672524), 
and partially by NSF 1907472.
}
\thanks{X. Ding and D. Li are with School of Information, Renmin University of China, Beijing, 100872, China (e-mail: dxj@ruc.edu.cn; deyingli@ruc.edu.cn).}
\thanks{D. Li is the corresponding author.}
\thanks{J. Guo and W. Wei are with the Department of Computer Science, Erik Jonsson
School of Engineering and Computer Science, University of Texas at Dallas,
Richardson, TX, 75080, USA (e-mail: jianxiong.guo@utdallas.edu; weiliwu@utdallas.edu).}
}

%
%

\markboth{Journal of \LaTeX\ Class Files,~Vol.~14, No.~8, August~2015}%
{Shell \MakeLowercase{\textit{et al.}}: Bare Demo of IEEEtran.cls for IEEE Journals}
%



\maketitle

\begin{abstract}
Attracted by the inherent security and privacy protection of the blockchain, 
incorporating blockchain into Internet of Things (IoT) has been widely studied in these years.
However, the mining process requires high computational power, which prevents IoT devices from directly participating in 
blockchain construction. For this reason, edge computing service is introduced to help build the IoT blockchain, where
IoT devices could purchase computational resources from the edge servers. 
In this paper, we consider the case that IoT devices also have other tasks that need the help of edge servers, 
such as data analysis and data storage. 
The profits they can get from these tasks is closely related to the amounts of resources they purchased from the edge servers.
In this scenario, IoT devices will allocate their limited budgets to purchase different resources from different edge servers, 
such that their profits can be maximized.
Moreover, edge servers will set ``best'' prices such that they can get the biggest benefits.
Accordingly, there raise a pricing and budget allocation problem between 
edge servers and IoT devices. 
We model the interaction between edge servers and IoT devices as a multi-leader multi-follower Stackelberg game, whose
objective is to reach the Stackelberg Equilibrium (SE). 
We prove the existence and uniqueness of the SE point, and design efficient algorithms to reach the SE point.
In the end, we verify our model and algorithms by performing extensive simulations, and the results show the correctness and 
effectiveness of our designs.
\end{abstract}

\begin{IEEEkeywords}
Internet of things, Blockchain, Edge computing,  Stackelberg game.
\end{IEEEkeywords}

%
\IEEEpeerreviewmaketitle


%
%
%
%

\section{introduction}
In the past few decades, the Internet of Things (IoT) has been greatly developed and attracted more and more attention in 
academia and industry. 
The IoT technology helps integrate data by connecting different types of devices and 
has played an irreplaceable role in many fields, such as smart homes, smart factories,  smart grids, and so on. 
In the traditional centralized IoT system, all IoT devices are connected to a centralized cloud server, which is used to manage
devices and coordinate communications among devices. 
The most serious drawback of this centralized architecture is that it faces many problems, such as single point of failure, 
poor scalability, and network congestion \cite{rehman2019cloud}. 
Some studies introduce distributed IoT \cite{roman2013features} and peer-to-peer (P2P) networks \cite{kim2020p2p} 
 to overcome these problems. However, the above studies didn't solve the inherent threats and vulnerabilities of the IoT, such as security 
and privacy issues \cite{alaba2017internet}.

A very effective way to solve the above issues is to incorporate blockchain into IoT \cite{kshetri2017can}. 
The blockchain technology has been widely used since it was first implemented for Bitcoin in 2009 \cite{nakamoto2019bitcoin}.
Blockchain records data as a decentralized public ledger, it does not require a third party server to store the data. 
Instead, data are stored in the form of blocks and maintained by all of the members of the blockchain network. 
The distributed feature allows blockchain to avoid suffering single point of failure which may happen in centralized systems.
The blocks are linked by cryptography, and thus any change in a block will affect the subsequent blocks. 
The security of a blockchain mainly comes from the way that a new block is generated, which is called \emph{mining}.
To generate a new block, the members of the blockchain network need to win the competition of solving a hash puzzle which is 
very computation-consuming, and the winner will get a reward from the blockchain network platform. 
In this paper, we consider that the IoT blockchain network adopts the Proof-of-Work (PoW) consensus mechanism. 
It is worth mentioning that some researchers have proposed a Directed Acyclic Graph (DAG) based blockchain 
which is known as \emph{tangle} 
for lightweight IoT applications,  such as IoTA \cite{divya2018iota}. However, the DAG-based
blockchain has many vulnerabilities, such as it facing the threats of denial of service attacks and spam attacks \cite{bai2018state}, 
and it's vulnerable to double-spending attacks \cite{ferraro2019stability}. Therefore, we do not adopt the DAG-based blockchain in this 
paper. 

As mentioned before, solving the hash puzzle is computation-consuming, so it's hard for the lightweight IoT devices to participate in
the mining process. Fortunately, edge computing service is helpful for establishing an IoT blockchain \cite{pan2018edgechain}, 
where IoT devices could purchase computational power from edge servers.
Consider such an IoT blockchain network that contains lots of IoT systems such as smart factories or smart homes. Each IoT system 
can be seen as a group, and all of the groups together maintain the operation of the blockchain. 
The IoT blockchain network will attract nearby IoT systems to join in it due to its security and privacy protection.
Motivated by the reward from the blockchain network platform,
the IoT devices in an IoT system will purchase the  computational resource from the edge server which provides hash computing service
(\emph{hash-server}) 
to participate in the mining process.
In addition, these IoT devices may have other tasks that require the help of the edge server which provides task processing service
(\emph{task-server}).
For example, IoT devices that used for building smart cities or realizing augmented reality (AR) need to store and process large 
amounts of data \cite{morabito2018consolidate}, which is very difficult for lightweight IoT devices to accomplish. 
Thus it's necessary for these devices to purchase resources from the task-server, so that they can perform their tasks with the help of 
the task-server.
Generally, the more resources they purchase, the faster and better they perform their
tasks, and the more profits they could get from the tasks. 
As IoT devices usually have limited budgets, how to allocate the budgets to purchase different resources from the two kind of servers 
so as to maximize the profits,
therefore, is an important problem for these IoT devices.

Driven by profit, edge servers will set the unit price of their resources to maximize their utilities, and accordingly, 
there raise a pricing and budget allocation problem between edge servers and IoT devices, 
We model the interaction between edge servers and IoT devices as a multi-leader multi-follower Stackelberg game, where edge servers
are leaders and IoT devices are followers. 
The main contributions of this paper are summarized as follows.

\begin{itemize}
	\item We introduce the IoT blockchain network with edge computing, and describe the operation of the IoT blockchain system.
	\item We establish a multi-leader multi-follower Stackelberg game to model the interaction between edge servers and IoT devices.
	We prove that the Stackelberg equilibrium of the game exists and is unique, and then propose algorithms to find the Stackelberg 
	equilibrium in limited interactions. 
	\item We perform extensive simulations to validate the feasibility and effectiveness of our proposed algorithms, simulation results
	show that our algorithms can quickly reach the unique Stackelberg equilibrium point.
\end{itemize}

The rest of this paper is structured as follows. Section \ref{Sec:related} introduces the related works. Section \ref{sec:Blockchain}
describes the IoT blockchain with edge computing. Section \ref{Sec:problem} presents the Stackelberg game.
Section \ref{sec:Game analysis} designs algorithms to get the Stackelberg equilibrium. 
Section \ref{sec:Simulation} performs numerical simulations. 
And finally, Section \ref{sec:Conclusions} concludes this paper.

\section{related works} \label{Sec:related}
Due to the inherent security and privacy protection properties of the blockchain, 
incorporating blockchain technology into IoT has been widely studied in recent years. 
Novo \cite{novo2018blockchain} design an architecture for scalable access management in IoT based on blockchain technology.
To address the privacy and security issues in the smart grid, Gai \emph{et al.} \cite{gai2019permissioned} present a permissioned 
blockchain edge model by combing blockchain and edge computing technologies. 
Guo \emph{et al.} \cite{guo2020combined} design a blockchain-enabled energy management system to ensure the security of
energy trading between the power grid and energy stations. 
Li \emph{et al.} \cite{li2020resource} propose a resource optimization for delay-tolerant data in blockchain-enabled IoT, they use the 
blockchain technology to improve the data security and efficiency in the IoT system. 
Liu \emph{et al.} \cite{liu2020secure} propose a blockchain-based approach for the data provenance in IoT, which ensures the 
correctness and integrity of the query results.
Qi \emph{et al.} \cite{qi2020cpds} build a compressed and data sharing framework with the help of blockchain
technology, which provides efficient and private data management for industrial IoT.
Lei \emph{et al.} \cite{lei2020groupchain} design the \emph{groupchain} which is a two-chain structured blockchain to ensure the scalability 
of the IoT services with fog computing. 

There are some works that use the Stackelberg game to study the interaction among the participators in the edge computing-based 
blockchain network, which are closely related to our work.
Chang \emph{et al.} \cite{chang2020incentive} study the incentive mechanism for edge computing-based blockchain networks, in which 
they aim to find the Stackelberg equilibrium between the edge service provider and the miners. 
Yao \emph{et al.} \cite{yao2019resource} use a Stackelberg game to model the pricing and resource trading problem between the cloud 
provider and industrial IoT devices, and they find the near-optimal policy through a multiagent reinforcement learning algorithm.
Xiong \emph{et al.} \cite{xiong2017edge,xiong2018mobile} formulate a Stackelberg game to jointly maximize the profit of mobile 
devices and the edge server in mobile blockchain networks. 
Ding \emph{et al.} \cite{ding2020incentive} investigate the interaction between the blockchain platform and IoT devices, 
where their objective is to find the Stackelberg equilibrium such that both the blockchain platform and IoT devices could maximize their 
utility and profits respectively.
Guo \emph{et al.} \cite{guo2020blockchain} study a Stackelberg game and double auction based task offloading scheme for mobile 
blockchain. 
However, all of these existing works only considered the computational power demand of IoT devices, 
and the game models in these works only have one leader, which is fundamentally different from our work. 

\section{IoT Blockchain with Edge Computing}\label{sec:Blockchain}
In this section, we introduce the model of the IoT blockchain with edge computing, and describe the operation of the 
blockchain system. Moreover, we analyze the security and reliability of the IoT blockchain network.

\begin{figure}
	\centering
	\includegraphics[width=.49\textwidth]{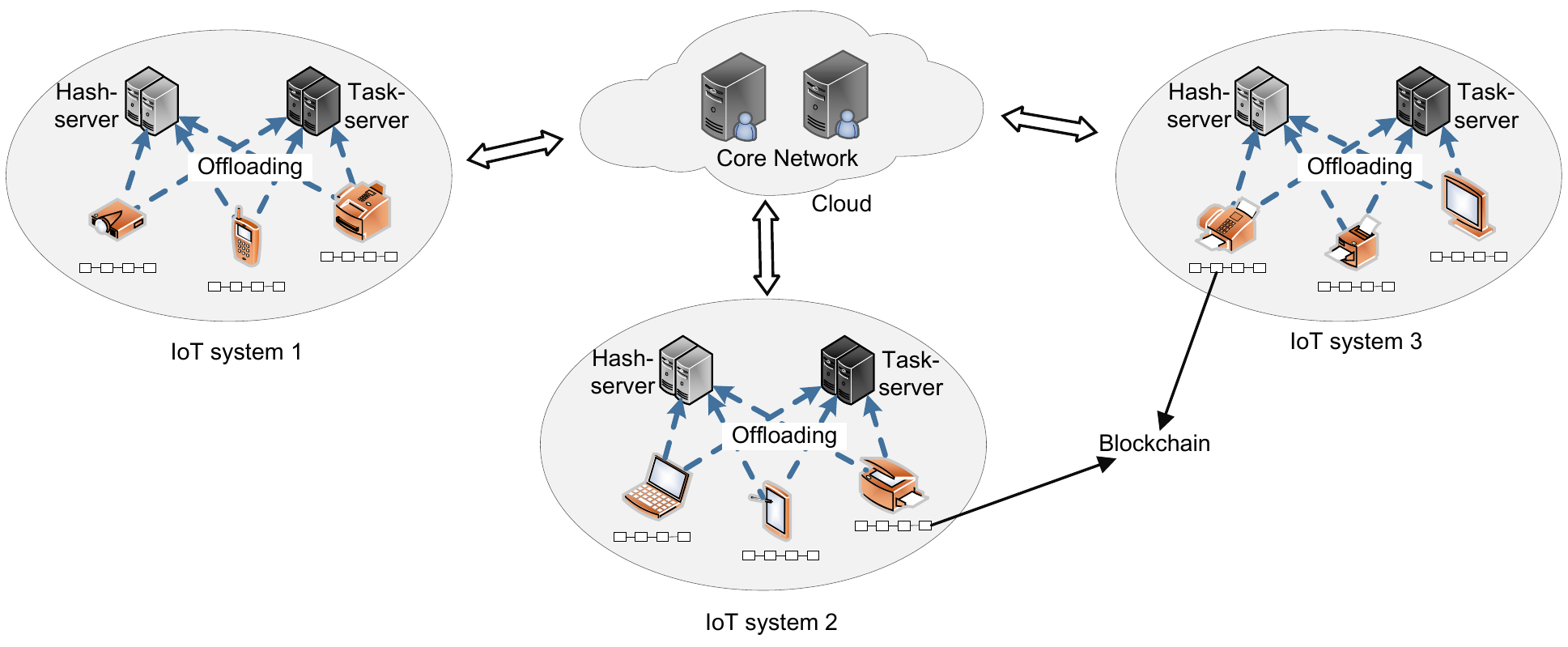}
	\caption{The architecture of the IoT blockchain with edge computing.}
	\label{fig:blockchain}
\end{figure}

\subsection{System Model} \label{sec:Model}
Fig. \ref{fig:blockchain} depicts the architecture of the system model of this paper.
Consider that there is an IoT blockchain network that has been running for a period which adopts the proof-of-work consensus
mechanism. The IoT blockchain network consists of lots of IoT systems such as smart factories or smart homes.
Each IoT system includes a set of IoT devices and can be seen as a group.
Due to the security and privacy protection brought by the blockchain, the IoT blockchain network will continuously attract other
IoT systems to join.

In each IoT system, there are two edge servers which provide hash computing service (\emph{hash-server}) and task processing 
service (\emph{task-server}), respectively.
Motivated by the reward from the blockchain network platform, devices in the IoT system would like to be miners of the blockchain 
network, that is, they will compete with other miners to scramble the right of generating a new block by solving a hash puzzle. 
Due to the limited computational power of these devices, they will purchase computational resources from the hash-server and then 
offload their hash puzzle to the hash-server during the mining process. 
Moreover, each IoT device has its own tasks, such as data collecting, data analysis, and data processing. IoT devices could benefit 
from performing these tasks. However, when the amount of data is relatively large, it is difficult for these IoT devices to perform the 
tasks. 
Then IoT devices will purchase task processing resources from the task-server to perform their
tasks. Generally, the more resource they purchase, the faster and better they perform their tasks, and then the more benefit they get 
from these tasks.

\subsection{Blockchain System}
\subsubsection{System Initialization}
Before each IoT device joins in the blockchain network, it needs to register with the Authentication Server (AS) which is authorized by
the blockchain platform. The process is as follows. 
A device $s_n$ first selects its own identifier $ID_n$, and then generate its 
public/private key pair $(PK_n, SK_n)$ with Elliptic Curve Digital Signature Algorithm (ECDSA) asymmetric cryptography 
\cite{johnson2001elliptic}. The public key is opened to the whole blockchain network, and the private key
is only known by the device itself. To avoid public key replacement attacks, each IoT device $s_n$ needs 
to get its digital certificate $Cert_n$ from the Certificate Authority (CA). The digital certificate $Cert_n$ is generated by CA's private key
$SK^{ca}$ with $s_n$'s identifier $ID_n$ and public key $PK_n$, i.e., $Cert_n = SK^{ca}(ID_n, PK_n)$. CA send the encrypted
message $m = PK_n(Cert_n)$ to $s_n$, and $s_n$ decrypt the message with its private key to get its digital certificate, that is, 
$Cert_n = SK_n(m)$. The digital certificate is used to uniquely identify the IoT device. 
After gets its digital certificate, the device $s_n$ submits its registration information $reg_n = PK^{as}(ID_n, PK_n, Cert_n)$ to the AS.
Upon receiving $reg_n$, AS get the registration information by decrypting $reg_n$ with its private key 
$(ID_n, PK_n, Cert_n) = SK^{as}(reg_n)$. Then AS check the identity of $s_n$, if $(ID_n, PK_n) = PK^{ca}(Cert_n)$, 
$s_n$ passes the authentication. Device $s_n$ gets its wallet address $WAD_n$ from AS, which is generated from its public key with 
SHA256,  RIPEMD-160 and BASE58 algorithms \cite{aitzhan2016security}. The AS will store the information 
$(ID_n, PK_n, Cert_n, WAD_n)$ about IoT device $s_n$.

\subsubsection{Create Transactions}
In the IoT blockchain network, devices can trading with each other, and purchase or exchange sensing data with each other. 
For example, a device $s_n$ wants to purchase the sensing data from $s_m$, $s_n$ first generate a request 
$req = \{ID_n || WAD_n || DataMes_n || T_{stamp}\}$, where $DataMes_n$ is the description of its request and $T_{stamp}$ is 
the timestamp of the message. 
Then $s_n$ signs the request message with its private key $SK_n$ and digital certificate $Cert_n$ to prove its identity, 
$Sreq = \{req || Sign_{SK_n}(Hash(req)) || Cert_n \}$.
At last, $s_n$ encrypt the message with $s_m$'s public key $EncSreq = PK_m(Sreq)$, and sent the message $EncSreq$ to $s_m$.
Upon receiving the message, $s_m$ first decrypt the message with its private key $Sreq = SK_m(EncSreq)$. 
Then check the signature and digital certificate of $s_n$. If $Hash(req) = PK_n(Sign_{SK_n}(Hash(req)))$, it can make sure that 
the message is sent by $s_n$. The digital certificate is used to validate the authenticity of the signature. $s_m$ use $CA$'s public
key to decrypt the digital certificate $Cert_n$, if $(ID_n, PK_n) = PK^{ca}(Cert_n)$, it's known that the signature is signed by $s_n$.
Then $s_m$ will send a response message to $s_n$, similar to the above process, all the messages between $s_n$ and $s_m$
are sent in an encrypted way. 
After finish the trading, $s_n$ will broadcast their trading record to the blockchain network, and waiting to be stored in the blockchain.
Besides the trading information between devices, some devices may want to store the sensing data which is important and sensitive into
blockchain. Both the trading records and sensing data are considered as transactions. 

\subsubsection{Building Blocks}
IoT devices collect a certain number of transactions in a period, and package them into a block. Each block is composed of two parts:
block content and block header. 
The block content records the detail of transactions in a Merkle tree structure.
The block header consists of the previous block hash value, which is used as a cryptographic link that 
creates the chain, a version number that used for tracking for software or protocol updates, a timestamp that records the time at which 
the block is generated, a Merkle tree root of all the transactions, a hash threshold value that records the current mining difficulty, and a 
nonce, which is used for solving the PoW puzzle. 
The mining process is similar to that in the Bitcoin system. Denoted $h_{data}$ by the block header excludes the nonce, then the mining 
process is to find a nonce $a$ such that $Hash(h_{data} + a) < difficulty$ \cite{nakamoto2019bitcoin}, where $difficulty$ is a 256-bits
binary number and is controlled by the blockchain platform to adjust the block generation speed.
As the hash operation is very costly, the hash task of each IoT device will be offloaded to the hash-server, and each device is only 
responsible for generating new blocks after receiving the result from the hash-server.
 
\subsubsection{Carrying Out Consensus Process}
The device who first solves the PoW puzzle gets the right to generate a new block, and then the new block needs to be verified by
other devices. By adopting the group signature and authentication scheme proposed in \cite{zhang2019group}, each IoT system 
in the blockchain network can be seen as a group. For a new block which is generated by a device in group $i$,
it needs to pass a two-round validation before to be added in the blockchain. 
In the first round, the block is checked by the devices in group $i$, each device will validate the transactions recorded in this block.
The block will get a signature if it passes the validation from a device, and
it can be broadcast to other groups for a second round validation only if it gets all the signatures of devices in group $i$.
In the second round, upon receiving the block, devices in other groups only check the signature attached in the block.
If more than $50\%$ of the devices agreed with the block, the block will be added into the blockchain.
Due to the constraints in memory, we let each IoT device only stores a certain number of the latest blocks, which is also applied in
\cite{koshy2020sliding,yang2020ldv}. The whole blockchain is stored in the monitoring nodes \cite{yang2020ldv} in each group 
(IoT system), as the monitoring nodes are authoritative and have larger memory capacity.

\subsection{Security and Reliability Analysis}
Different from traditional IoT systems, merging the IoT system into a blockchain network has many advantages, especially in terms of
security and reliability. Specifically, the IoT blockchain network with edge computing inherits the security and reliability performance 
of the blockchain, as shown in follows.

\begin{enumerate}
	\item \emph{Get rid of a third party:} IoT devices carry out transactions in a P2P manner, in which each device has the same 
	rights, and the trust between devices is built 
	with the help of the smart contract of the blockchain. Therefore, the IoT blockchain network guarantees the robustness and 
	scalability without involving a trusted third party.
	\item \emph{Privacy protection:} Blockchain can help deal with the increasing risk of sensitive data being exposed to malwares.
	In the blockchain network, the communication between devices uses the asymmetric encryption technology to protect the 
	sensitive data.
	The message sent to a device will be encrypted with the receiver's public key, and it can only be decrypted 
	by the receiver's private key. In this way, even if malicious devices intercept the message, they cannot know the content of the 
	message. 
	\item \emph{Integrity:} The blocks are duplicated recorded in different devices in a distributed way, so it's very hard for attackers 
	to tamper with the blockchain. Besides, the blocks are linked together through cryptography. 
	If an attacker attempts to tamper with the transactions in a block, the hash value of each 
	subsequent block will be changed, and thus the attacker needs to redo the PoW puzzle of each subsequent block, 
	which is nearly impossible for the attackers. 
	\item \emph{Authentication:} In this IoT blockchain network, each new block needs to pass a two-round validation before it is 
	added into the blockchain. It's very hard for an attacker to control a whole group (IoT system), so the new block that contains 
	illegal transactions cannot pass the first round validation. Even if some attackers forge the signature of a group, the illegal block
	cannot pass the second round validation. 
\end{enumerate}

\section{multi-leader multi-follower Stackelberg game} \label{Sec:problem}
Consider that a new IoT system now join in the IoT blockchain network, the IoT devices in this system will purchase resources
from the edge servers to participate in the mining process and perform their tasks.
Driven by profit, 
the hash-server and task-server will adjust the unit price of their resources to maximize their utilities.
After the two servers publish their pricing strategies, 
each IoT device will determine its strategy for purchasing resources from the two servers according to the resource price and 
their budgets, such that their profits can be maximized.
In this section, we first give the utility functions of the two servers and the profit function of each device, and then 
describe the problem to be addressed in this paper, specifically, we model the 
interaction between the two servers and IoT devices as a multi-leader multi-follower Stackelberg game.

\subsection{Utility Function} \label{prob:def}

We assume that each IoT device has a unique budget to purchase resources from edge servers. The amount of resources
they purchase from the two edge servers depends on how many profits they can get from the trading and are limited by their budgets.
Each IoT device will allocate their budget to purchase different services from the hash-server and the task-server to maximize their 
profits. 
For the hash-server and the task-server, they will set the unit price of their resources to maximize their utilities. Moreover, there is 
a competition between the two servers. For example, if the unit price of resources from hash-server is too high, IoT devices would
purchase more resources from the task-server, and vice versa. 
Naturally, we 
model the interaction between the two servers and IoT devices as a multi-leader multi-follower Stackelberg game, where the hash-server
and the task-server act as leaders who first set the unit price of their resources, and IoT devices act as followers who determine their
strategies according to the leaders' bids.

We use $\mathcal{S} = \{s_1, s_2, \dots, s_n\}$ to denote the set of IoT devices, the budget for each device $s_i \in \mathcal{S}$ to 
purchase resources is $b_i$ where $b_i > 0$. The unit price of resources from the hash-server and the task-server are denoted by 
$p_h$ and $p_t$ per day, respectively. 
Let $x_i^h$ and $x_i^t$ be the amount of resources purchased by $s_i$ from the hash-server and the 
task-server, respectively. The amount of resources purchased buy each device is limited by its budget, that is, 
$x_i^h * p_h + x_i^t * p_t \leq b_i, \forall s_i \in \mathcal{S}$.

The profits of each IoT device $s_i \in \mathcal{S}$ includes two parts. The first part comes from mining new blocks for the blockchain 
network, which is related to the amount of resources $x_i^h$ purchased by $s_i$ from the hash-server.
The second part comes from performing tasks, which is related to the amount of resources $x_i^t$ purchased by $s_i$ from the 
task-server. We use $P_i^h$ and $P_i^t$ to denote the two parts of profits, respectively. In the following, we will describe 
how to calculate the two parts of profits.

As the blockchain network has been running for a period, in this paper, we assume that the total hash computational power of the 
blockchain network in a period time in the future can be estimated and is denoted by $H$. In the PoW consensus, the first miner who
solves the hash puzzle has the right to generate a new block and will get the reward from the blockchain network. The probability of a 
miner winning the mining competition is directly related to the hash computational power. 
We use $pro_i$ to denote the probability that device $s_i \in \mathcal{S}$ is the first one to solve the hash puzzle,
and $pro_i$ can be estimated by 
\begin{gather}
	pro_i = \frac{x_i^h}{H+x_i^h}.
\end{gather}

Generally, the blockchain network will adjust the difficulty of the hash puzzle periodically according to the total hash computation power
in the network to stabilize the block generation speed. We assume that an average of $N$ new blocks are generated per day, and 
the miners will get a reward $R$ for generating a new block. The expected reward obtained by device $s_i$ in per day is $pro_i RN$,
and the cost is $x_i^hp_h$.
Then the expected profits that $s_i$ gets from the mining process in a day is calculated as
\begin{gather}
	P_i^h = pro_i RN - x_i^hp_h.
\end{gather}

The profits $P_i^t$ got by $s_i$ comes from performing tasks is related to the amount of resources purchased by $s_i$ from the task 
server, we use a logarithmic function to estimate the benefits of device $s_i$ for performing tasks, and then $P_i^t$ is calculated as
\begin{gather}
	P_i^t = \alpha \log(1 + \beta x_i^t) - x_i^tp_t,
\end{gather}
where $\alpha > 0$ and $\beta \geq 1$ are two constant parameters. 

Therefore, the total profits got by device $s_i$ is calculated as
\begin{gather} \label{equ:miner_profits}
	\begin{split}
		P_i &= P_i^h + P_i^t  \\
		&= RN\frac{x_i^h}{H+x_i^h} - x_i^hp_h + \alpha \log(1 +  \beta x_i^t) - x_i^tp_t.
	\end{split}
\end{gather}

We use $U_h$ and $U_t$ to denote the utility of the hash-server and task-server, respectively. Assume that the unit
hash resource cost of  the hash-server is $c_h$, and the unit task resource cost of the task-server is $c_t$. Then
the utilities of the two servers can be calculated as
\begin{gather} \label{leaderU:hash}
	U_h = \sum_{s_i\in \mathcal{S}} (p_h - c_h)x_i^h, 
\end{gather}
\begin{gather}\label{leaderU:task}
	U_t =  \sum_{s_i\in \mathcal{S}} (p_t - c_t)x_i^t.
\end{gather}

The profits $P_i$ gots by device $s_i$ involves two parts, i.e., $P_i^h$ and $P_i^t$, which comes from trading with the two servers,
respectively. 
The first-order derivatives of $P_i^h$ and $P_i^t$ are
\begin{gather}
	\frac{\partial P_i^h}{\partial x_i^h} = RN\frac{H}{(H+x_i^h)^2} - p_h,
\end{gather}
\begin{gather}
	\frac{\partial P_i^t}{\partial x_i^t} = \frac{\alpha \beta}{1 + \beta x_i^t} - p_t.
\end{gather}

To make the problem reasonable, the conditions $\frac{\partial P_i^h}{\partial x_i^h} (0) \geq 0$ and 
$\frac{\partial P_i^t}{\partial x_i^t}(0) \geq 0$ should be hold, otherwise, IoT devices will never purchase resources from the hash-server 
or the task-server.  Hence, we have $p_h \leq \frac{RN}{H}$ and $p_t \leq \alpha \beta$. As servers will never sell their resources at a 
price below the cost, that is, $p_h \geq c_h$ and $p_t \geq c_t$. Therefore, in this paper, we assume that $c_h \leq p_h \leq \frac{RN}{H}$ and 
$c_t \leq p_t \leq \alpha \beta$.

\subsection{Problem Formulation}

The interaction between the two servers and IoT devices has two stages. In the upper stage, the hash-server and the task-server
offer a unit price of their resources. In the lower stage, IoT devices determine their strategies to maximize their profits according 
to the price of different services. In the following, we give a detailed definition of the problem in each stage.

\begin{problem}
	The problem in the lower stage (followers side).
	\begin{align}
		 &\max\limits_{x_i^h, x_i^t} ~~~ P_i\\
    		&s.t.  ~~~ x_i^h p_h + x_i^t p_t \leq b_i,  \label{equ:P1_c1} \\
		&~~~~~~~ x_i^h \geq 0, x_i^t \geq 0.   \label{equ:P1_c2}
	\end{align}
\end{problem}

\begin{problem} \label{pro:2}
	The problem in the upper stage (leaders side).
	\begin{align}
		 &\max\limits_{p_h} ~~~ U_h\\
    		&~ s.t.  ~~~c_h \leq p_h \leq \frac{RN}{H},
	\end{align}
	and 
	\begin{align}
		 &\max\limits_{p_t} ~~~ U_t\\
    		&~ s.t.  ~~~ c_t \leq p_t \leq \alpha \beta.
	\end{align}
\end{problem}

Note that in the lower stage, each IoT device make their decision independently, and in the upper stage, the two servers are also
non-cooperative. Therefore, the problems in the two stages form a non-cooperative multi-leader multi-follower Stackelberg game.
Our objective is to find the Stackelberg equilibrium (SE) point of the game, where none of the players of the game wants to change 
its strategy unilaterally. 
The SE point in our model is defined as follows.

\begin{define} \label{def:stack}
	Let $\bm{x_i^*} = \{x_i^{h*}, x_i^{t*}\}$ be a strategy of IoT device $s_i$, and we use
	$\bm{X^*} = \{\bm{x_1^*},\bm{x_2^*},\dots,\bm{x_n^*}\}$ to denote the set of strategies of all of the IoT devices. 
	Let $p_h^*$ and $p_t^*$ be the strategies of the hash-server and the task-server, respectively. 
	The point $(\bm{X^*}, p_h^*, p_t^*)$ is the Stackelberg equilibrium point if the following conditions are satisfied.
	\begin{align}
			P_i(\bm{x_i^*}, p_h^*, p_t^*) &\geq P_i(\bm{x_i}, p_h^*, p_t^*), \forall s_i\in \mathcal{S}, \\
			U_h(p_h^*, p_t^*) &\geq U_h(p_h, p_t^*),\\
			U_t(p_t^*, p_h^*) &\geq U_t(p_t, p_h^*),
	\end{align}
	where $\bm{x_i} = \{x_i^h, x_i^t\}$ is an arbitrary feasible strategy for any device $s_i \in \mathcal{S}$, $p_h$ and $p_t$ are 
	arbitrary feasible strategies for the hash-server and the task-server, respectively.
\end{define}


\section{Solution of the multi-leader multi-follower Stackelberg game}\label{sec:Game analysis}
In this section, we analyze the existence and uniqueness of the Stackelberg equilibrium point of the multi-leader multi-follower 
Stackelberg game.  We first analyze the lower stage of the game, where each follower purchases different resources from the two 
servers with a limited budget to maximize its profits. Then we analyze the upper stage of the game, where the two servers determine 
their pricing strategies to maximize their utilities.

\subsection{Lower stage (followers side) analysis} \label{subsec:miners}
The second-order derivatives of the profits function $P_i$ of device $s_i$ are
\begin{gather}
	\frac{\partial^2 P_i}{\partial (x_i^h)^2} = \frac{-2RN}{(H+x_i^h)^3} < 0,
\end{gather}
\begin{gather}
	\frac{\partial^2 P_i}{\partial (x_i^t)^2} = \frac{-\alpha \beta^2}{(1 + \beta x_i^t)^2} < 0,
\end{gather}
\begin{gather}
	\frac{\partial^2 P_i}{\partial x_i^h \partial x_i^t} = 0.
\end{gather}
Therefore, the profits function $P_i$ is strictly concave, and the problem for each device $s_i$ in the lower stage is actually a convex 
optimization problem. Sequentially, we use the Karush–Kuhn–Tucker (KKT) conditions to solve the problem.

Let $\lambda_1, \lambda_2$ and $\lambda_3$ be the Lagrange's multipliers that associated with conditions in Eqs. (\ref{equ:P1_c1}) and 
(\ref{equ:P1_c2}). Then we define the Lagrangian function as follows.
\begin{gather}
	L_i = P_i + \lambda_1(b_i - x_i^h p_h - x_i^t p_t ) + \lambda_2 x_i^h + \lambda_3 x_i^t.
\end{gather} 

The KKT conditions (including four group of conditions) of Problem 1 are listed as follows.

\emph{Stationarity conditions:}
\begin{flalign}
 	&~~~~~~~\frac{\partial L_i}{\partial x_i^h} = RN\frac{H}{(H+x_i^h)^2} - p_h - \lambda_1 p_h + \lambda_2 = 0, &\label{KKT:1}\\
	&~~~~~~~\frac{\partial L_i}{\partial x_i^t} = \frac{\alpha\beta}{1 + \beta x_i^t} - p_t - \lambda_1 p_t + \lambda_3 = 0.  &\label{KKT:2}
\end{flalign}

\emph{Primal feasibility conditions:}
\begin{flalign}
	&~~~~~~~b_i - x_i^h p_h - x_i^t p_t \geq 0, & \label{KKT:Primal}  \\
	&~~~~~~~x_i^h, x_i^t \geq 0. & \label{KKT:Pri2}
\end{flalign}

\emph{Dual feasibility conditions:}
\begin{flalign}
	&~~~~~~~\lambda_1, \lambda_2, \lambda_3 \geq 0. &  \label{KKT:dufea}
\end{flalign}

\emph{Complementary slackness conditions:}
\begin{flalign}
	&~~~~~~~\lambda_1(b_i - x_i^h p_h - x_i^t p_t ) = 0, &\label{KKT:dual} \\
	&~~~~~~~\lambda_2 x_i^h = 0, &\label{KKT:4}\\
	&~~~~~~~\lambda_3 x_i^t = 0. &\label{KKT:5}
\end{flalign}


The optimal solution of the problem is taken in one of the following four cases.

\emph{(1) Case 1:} $x_i^h = x_i^t = 0$. According to the KKT condition (\ref{KKT:dual}), we have $\lambda_1 = 0$ as $b_i > 0$.
Substitute it into KKT conditions (\ref{KKT:1}) and (\ref{KKT:2}), we have $\lambda_2 = p_h - \frac{RN}{H}$ and 
$\lambda_3 = p_t - \alpha\beta$. As $p_h \leq \frac{RN}{H}$ and $p_t \leq \alpha\beta$ hold, we have $\lambda_2 \leq 0$ 
and $\lambda_3 \leq 0$. The KKT conditions can be satisfied only when $\lambda_2 = 0$ and $\lambda_3 = 0$. Thus we need to check
whether $p_h = \frac{RN}{H}$ and $p_t = \alpha\beta$ hold, if yes, the optimal solution is $x_i^h = x_i^t = 0$, otherwise,
the optimal solution is not in this case.

\emph{(2) Case 2:} $x_i^h = 0$ and $x_i^t > 0$.
In this case, we have $\lambda_3 = 0$ according to the KKT condition (\ref{KKT:5}). 

Consider $\lambda_1 = 0$, substitute $x_i^h = 0$ and $\lambda_1 = 0$ into (\ref{KKT:1}), we have $\lambda_2 = p_h - \frac{RN}{H} \leq 0$.
If $p_h < \frac{RN}{H}$, the KKT condition (\ref{KKT:dufea}) cannot be satisfied as $\lambda_2 < 0$, which means $\lambda_1 = 0$
is not feasible in this case. Otherwise, if  $p_h = \frac{RN}{H}$, we have $\lambda_2 = 0$.
Substitute $\lambda_1 =  \lambda_3 = 0$ into (\ref{KKT:2}), we have $x_i^t = \frac{\alpha}{p_t} - \frac{1}{\beta}$. 
Then we need to check whether the primal feasibility condition (\ref{KKT:Primal}) is satisfied, if yes, the optimal solution is 
$(0, \frac{\alpha}{p_t} - \frac{1}{\beta})$, if no, $\lambda_1 = 0$ is not feasible in this case.

Consider $\lambda_1 > 0$, according to (\ref{KKT:dual}), we have $b_i - x_i^h p_h - x_i^t p_t  = 0$. 
Combing $x_i^h = 0$, we have 
\begin{gather} \label{case3:s1}
	x_i^t = \frac{b_i}{p_t }.
\end{gather}
Substitute (\ref{case3:s1}) and $\lambda_3=0$ into KKT condition (\ref{KKT:2}),
we have
\begin{gather} \label{case3:lambda1}
	\lambda_1 = \frac{\alpha\beta}{\beta b_i + p_t} - 1.
\end{gather}
Substitute (\ref{case3:lambda1}) and $x_i^h = 0$ into (\ref{KKT:1}), we have
\begin{gather} \label{case3:lambda2}
	\lambda_2 = \frac{\alpha\beta p_h}{\beta b_i+  p_t} - \frac{RN}{H}.
\end{gather}
It obvious that $x_i^t = \frac{b_i}{p_t}> 0$ is satisfied. 
If $\lambda_1$ got by (\ref{case3:lambda1}) satisfies $\lambda_1 > 0$ and $\lambda_2$ got by (\ref{case3:lambda2}) satisfies 
$\lambda_2 \geq 0$, it means all of the KKT conditions can be satisfied, and thus the optimal solution can be determined as
$(0, \frac{b_i}{p_t})$.
Otherwise, the optimal solution is not in Case 2.

\emph{(3) Case 3:} $x_i^h > 0$ and $x_i^t = 0$.
In this case, we have $\lambda_2 = 0$ according to the KKT condition (\ref{KKT:4}).

Consider $\lambda_1 = 0$, substitute $x_i^t = 0$ and $\lambda_1 = 0$ into (\ref{KKT:2}), we have 
$\lambda_3 = p_t - \alpha\beta \leq 0$.
If $p_t < \alpha\beta$, the KKT condition (\ref{KKT:dufea}) cannot be satisfied as $\lambda_3 < 0$, which implies 
$\lambda_1 = 0$ in not feasible in this case.
Otherwise, if $p_t = \alpha\beta$, we have $\lambda_3 = 0$. Substitute $\lambda_1 = \lambda_2 = 0$ into (\ref{KKT:1}),
we have $x_i^h = \sqrt{\frac{RNH}{p_h}} - H$.
Then we need to check whether the primal feasibility condition (\ref{KKT:Primal}) is satisfied, if yes, the optimal solution is 
$(\sqrt{\frac{RNH}{p_h}} - H, 0)$, if no, $\lambda_1 = 0$ is not feasible in this case.

Consider $\lambda_1 > 0$, we have $b_i - x_i^h p_h - x_i^t p_t  = 0$ according to (\ref{KKT:dual}).
Combing $x_i^t = 0$, we have 
\begin{gather} \label{case4:xh}
	x_i^h = \frac{b_i}{p_h}.
\end{gather}
Substitute (\ref{case4:xh}) and $\lambda_2 = 0$ into KKT condition (\ref{KKT:1}), we have
\begin{gather} \label{case4:lambda1}
	\lambda_1 = \frac{RNHp_h}{(b_i + Hp_h)^2} - 1.
\end{gather}
Substitute (\ref{case4:lambda1}) and $x_i^t = 0$ into (\ref{KKT:2}), we have
\begin{gather} \label{case4:lambda3}
	\lambda_3 = \frac{RNHp_hp_t}{(b_i + Hp_h)^2} - \alpha\beta.
\end{gather}
It's obvious that $x_i^h = \frac{b_i}{p_h} > 0$ is satisfied. 
If $\lambda_1$ got by (\ref{case4:lambda1}) satisfies $\lambda_1 > 0$ and $\lambda_3$ got by (\ref{case4:lambda3})
satisfies $\lambda_3 \geq 0$, it means that all of the KKT conditions can be satisfied, and thus the optimal solution 
is $(\frac{b_i}{p_h}, 0)$. Otherwise, the optimal solution is not in Case 3.

\emph{(4) Case 4:} $x_i^h >0$ and $x_i^t > 0$. In this case, we have $\lambda_2 = \lambda_3 = 0$ according to the KKT conditions 
(\ref{KKT:4}) and (\ref{KKT:5}). 

Suppose $\lambda_1 = 0$, substitute it into the stationarity conditions (\ref{KKT:1}) and (\ref{KKT:2}), we have 
\begin{gather} \label{case2:solution1}
	x_i^h = \sqrt{\frac{RNH}{p_h}} - H; x_i^t = \frac{\alpha}{p_t} - \frac{1}{\beta}.
\end{gather}
As $p_h \leq \frac{RN}{H}$ and $p_t \leq \alpha\beta$, it's easy to know that the condition (\ref{KKT:Pri2}) is satisfied. Then we 
check whether the condition (\ref{KKT:Primal}) is satisfied. If yes, then the solution shown in Eq. (\ref{case2:solution1}) is the optimal
solution. Otherwise, $\lambda_1 = 0$ is not feasible in this case. 

Now we consider the case that $\lambda_1 > 0$.
According to the condition (\ref{KKT:dual}), we have $b_i - x_i^h p_h - x_i^t p_t  = 0$. 
Solving conditions (\ref{KKT:1}) and (\ref{KKT:2}), we have
\begin{gather} \label{case2:s2}
	x_i^h = \sqrt{\frac{RNH}{p_h + \lambda_1p_h}} - H; x_i^t = \frac{\alpha}{p_t + \lambda_1 p_t} - \frac{1}{\beta}.
\end{gather}
Substitute (\ref{case2:s2}) into $b_i - x_i^h p_h - x_i^t p_t  = 0$, we have
\begin{gather} \label{case2:con1}
	A - B\sqrt{\frac{1}{1+\lambda_1}} - \frac{\alpha}{1+ \lambda_1} = 0.
\end{gather}
where $A = b_i + Hp_h + \frac{1}{\beta} p_t > 0$, and $B=\sqrt{RNHp_h} > 0$. 
Let $t = \sqrt{1+ \lambda_1} > 0$, substitute $t$ into (\ref{case2:con1}) and solve the function, we have 
$t = \frac{B \pm \sqrt{B^2 + 4A \alpha}}{2A}$. As $t > 0$, we have
\begin{gather} \label{case2:t}
	t =  \sqrt{1+ \lambda_1} = \frac{B + \sqrt{B^2 + 4A \alpha}}{2A}.
\end{gather}
By solving (\ref{case2:t}), we have
\begin{gather} \label{case2:lambda}
	\lambda_1 = \left(\frac{B + \sqrt{B^2 + 4A \alpha}}{2A}\right)^2 - 1.
\end{gather}
Then we check whether $\lambda_1$ got by (\ref{case2:lambda}) satisfies $\lambda_1 > 0$. 
 If $\lambda_1 \leq 0$, it means that the solutions found in (\ref{case2:s2}) cannot satisfy all of the KKT conditions, and thus the optimal 
 solution is not in Case 4.
 Otherwise, we substitute (\ref{case2:lambda}) into (\ref{case2:s2}), and we will get a solution $(x_i^h, x_i^t)$. Then we check whether
 the solution $(x_i^h, x_i^t)$ satisfies $x_i^h >0$ and $x_i^t > 0$. 
 If yes, the solution $(x_i^h, x_i^t)$ 
 is the optimal solution. If no, the optimal solution is not in Case 4.

The optimal solution will be found in one of the four cases from Case 1 to Case 4. 
Based on above analysis, we present the algorithm $\sf{FOSD}$ to solve Problem 1 in the lower stage.
The pseudo-code is shown in Algorithm \ref{algorithm:miner}.

\begin{algorithm} 
	\caption{Find Optimal Strategy for a Device $s_i$. ($\sf{FOSD}$)}
	\label{algorithm:miner}
	\begin{algorithmic}[1]
		\Require H, R, N, $\alpha$, $\beta$, $b_i$, $p_h$ and $p_t$;
		\Ensure The optimal strategy $(x_i^h, x_i^t)$ for IoT device $s_i$;
		\Statex // \textbf{Case 1:}
		\If{$p_h = \frac{RN}{H}$ \&\& $p_t = \alpha\beta$}
			\State \Return $(0,0)$;
		\EndIf
		
		\Statex // \textbf{Case 2-1:}
		\State $x_i^t = \frac{\alpha}{p_t} - \frac{1}{\beta}$;
		\If{$p_h = \frac{RN}{H}$ \&\& $b_i - x_i^t p_t \geq 0$}
				\State \Return $(0, x_i^t)$;
		\EndIf
		\Statex // \textbf{Case 2-2:}
		\State $x_i^t = \frac{b_i}{p_t}$;
		 $\lambda_1 = \frac{\alpha\beta}{\beta b_i + p_t} - 1$; $\lambda_2 = \frac{\alpha\beta p_h}{\beta b_i+ p_t} - \frac{RN}{H}$;
		\If{$\lambda_1 > 0$ \&\& $\lambda_2 \geq 0 $}
				\State \Return $(0, x_i^t)$;
		\EndIf
		
		\Statex // \textbf{Case 3-1:}
		\State $x_i^h = \sqrt{\frac{RNH}{p_h}} - H$;
		\If{$p_t = \alpha\beta$ \&\& $b_i - x_i^h p_h \geq 0$}
				\State \Return $(x_i^h, 0)$;
		\EndIf
		\Statex // \textbf{Case 3-2:}
		\State $x_i^h = \frac{b_i}{p_h}$;
		 $\lambda_1 = \frac{RNHp_h}{(b_i + Hp_h)^2} - 1; \lambda_3 = \frac{RNHp_hp_t}{(b_i + Hp_h)^2} - \alpha\beta$;
		\If{$\lambda_1 > 0$ \&\& $\lambda_3 \geq 0$}
				\State \Return $(x_i^h, 0)$;
		\EndIf
		
		\Statex // \textbf{Case 4-1:}
		\State $x_i^h = \sqrt{\frac{RNH}{p_h}} - H; x_i^t = \frac{\alpha}{p_t} - \frac{1}{\beta}$;
		\If{$b_i - x_i^h p_h - x_i^t p_t \geq 0$}  
			\State \Return $(x_i^h, x_i^t)$;
		\EndIf
		\Statex // \textbf{Case 4-2:}
		\State $A = b_i + Hp_h + \frac{1}{\beta} p_t $, $B=\sqrt{RNHp_h} $;
		\State $\lambda_1 = \left(\frac{B + \sqrt{B^2 + 4A \alpha}}{2A}\right)^2 - 1$;
		\If{$\lambda_1 > 0$}
			\State $x_i^h = \sqrt{\frac{RNH}{p_h + \lambda_1p_h}} - H; x_i^t = \frac{\alpha}{p_t + \lambda_1 p_t} - \frac{1}{\beta}$;
			\If{$x_i^h > 0 $ \&\& $ x_i^t > 0$}
				\State \Return $(x_i^h, x_i^t)$;
			\EndIf
		\EndIf
		
	\end{algorithmic}
\end{algorithm}

\subsection{Upper stage (leaders side) analysis}
On the leaders' side, hash-server and task-server set their unit prices $p_h$ and $p_t$ of resources to maximize their utilities $U_h$ and
$U_t$ which are calculated by Eqs. (\ref{leaderU:hash}) and (\ref{leaderU:task}), respectively. Note that the pricing strategies of leaders
will directly affect the strategies of followers, which has been analyzed in Section \ref{subsec:miners}. Therefore, the strategies of
the hash-server and task-server will affect each other's utility. The game between the two servers is non-cooperative and competitive. 
To maximize their utilities, each server (leader) should give a suitable unit price of its resource. For example, for the hash-server, if 
the unit price of resource $p_h$ is too high, the followers will prefer to purchase more resources from the task-server, and thus the
hash-server will get a very low utility. On the contrary, if $p_h$ is set too low, although the followers tend to purchase resources from
the hash-server, the total purchased resources are limited due to the budget limitation of these followers, which also results in a low
utility.  
In the following, we will show that the game between the two servers will reach a Nash equilibrium, and we design an algorithm to
find the Nash equilibrium point.

The concept of the Nash Equilibrium (NE) of the game between the two servers is defined as follows.

\begin{define}
	Let $p_h^*$ and $p_t^*$ be the strategies of the hash-server and the task-server, respectively, ($p_h^*$, $p_t^*$) is the Nash 
	equilibrium if the following conditions are satisfied.
	\begin{align}
		U_h(p_h^*, p_t^*) &\geq U_h(p_h, p_t^*),\\
		U_t(p_t^*, p_h^*) &\geq U_t(p_t, p_h^*),
	\end{align}
	where $p_h$ and $p_t$ are arbitrary feasible strategies for the hash-server and the task-server, respectively.
\end{define}

According to the definition, at the NE point, none of the servers can improve its utility by unilaterally changing its strategy. Therefore, 
when the game reaches a NE point, the interaction between the two servers is suspended, and the pricing strategies of the two servers
will never change again. 
Next, we will prove the existence and uniqueness of the NE point of the game between the two servers.

\begin{theorem} \label{Theorem:NE}
 	The NE point of the game between the two servers exists and is unique. 
\end{theorem}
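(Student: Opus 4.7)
My plan is to establish existence and uniqueness separately using the standard toolkit for continuous concave $n$-person games. For existence, I would appeal to the Debreu-Glicksberg-Fan theorem, whose three hypotheses are (i) nonempty, compact, and convex strategy sets; (ii) continuity of the payoffs on the joint strategy space; and (iii) concavity of each player's payoff in its own strategy. Hypothesis (i) is immediate since the strategy intervals $[c_h, RN/H]$ and $[c_t, \alpha\beta]$ are closed and bounded. For (ii), I would verify that every follower's optimal response $(x_i^{h*}, x_i^{t*})$, as characterized by the case analysis of Section \ref{subsec:miners}, is continuous in $(p_h, p_t)$ --- this reduces to checking that the formulas from Cases 1--4 agree on their common case boundaries (for example, Case 4-1 glues to Case 2-1 as $p_h \uparrow RN/H$, and each budget-binding sub-case glues to its unconstrained counterpart exactly when the budget constraint first activates). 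Since $U_h$ and $U_t$ are linear in $x_i^{h*}$ and $x_i^{t*}$ with continuous price coefficients, continuity of $U_h$ and $U_t$ follows.

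The crux is hypothesis (iii). Within each KKT case I would substitute the explicit form of $\sum_i x_i^{h*}$ into $U_h = (p_h - c_h)\sum_i x_i^{h*}(p_h, p_t)$ and compute $\partial^2 U_h/\partial p_h^2$. In Case 4-1, for instance, $\sum_i x_i^{h*} = n(\sqrt{RNH/p_h} - H)$, so $U_h = (p_h - c_h)\, n(\sqrt{RNH/p_h} - H)$ can be differentiated directly to show strict concavity in $p_h$; in Case 3-2, $U_h = (p_h - c_h)\sum_i b_i/p_h$ is linear in $p_h$ and hence concave; and the mixed Case 4-2 can be handled via the reparameterization $t = \sqrt{1+\lambda_1}$ already used in equation (\ref{case2:lambda}). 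A symmetric argument covers $U_t$ as a function of $p_t$. The Debreu-Glicksberg-Fan theorem then guarantees a pure-strategy NE. For uniqueness, given that $U_h, U_t$ are strictly concave in their own prices, the best-response mappings $BR_h(p_t)$ and $BR_t(p_h)$ are single-valued and continuous, and I would show that the composed map $T(p_h, p_t) = (BR_h(p_t), BR_t(p_h))$ is a contraction on $[c_h, RN/H]\times[c_t, \alpha\beta]$ in the sup-norm. Implicitly differentiating the first-order conditions $\partial U_h/\partial p_h = 0$ and $\partial U_t/\partial p_t = 0$ produces explicit bounds on $|dBR_h/dp_t|$ and $|dBR_t/dp_h|$; these stay strictly below one because the opponent's price enters each follower's demand only through the budget constraint, damped by the square-root and logarithmic curvature of $P_i^h$ and $P_i^t$. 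The Banach fixed-point theorem then forces a unique fixed point, which is the unique NE.

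The main obstacle, I expect, is not any individual calculation but the bookkeeping across the KKT case boundaries. Followers switch among the sub-cases enumerated in Algorithm \ref{algorithm:miner} as $(p_h, p_t)$ varies, so obtaining \emph{global} concavity of each $U$ in its own price --- and a uniform contraction constant for $T$ --- requires demonstrating $C^1$-gluing at every interface and uniformly nonpositive second derivatives inside every piece. Most of the argument will therefore be spent verifying matching conditions at each case boundary and bounding the implicit-function derivatives piece by piece, rather than on any single clever estimate.
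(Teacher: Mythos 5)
Your existence argument is essentially the paper's own: the paper likewise observes that the strategy sets $[c_h, \frac{RN}{H}]$ and $[c_t,\alpha\beta]$ are non-empty, closed, and convex, computes $\frac{\partial^2 U_h}{\partial p_h^2}$ piecewise over the possible closed forms of $x_i^h$ (namely $\sqrt{RNH/p_h}-H$, its $\lambda_1$-scaled variant, $0$, and $b_i/p_h$), concludes that $U_h$ (and symmetrically $U_t$) is concave in its own price, and then invokes the concave $n$-person game framework of \cite{rosen1965existence}. You are right---and more careful than the paper---that the piecewise second-derivative computation only yields concavity \emph{inside} each KKT regime, so one must still check that the demand functions glue continuously (indeed in a way that preserves concavity) at the regime boundaries; the paper silently skips this, and your plan at least names it as the main bookkeeping burden.

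Where you genuinely diverge is uniqueness. The paper obtains uniqueness from the same appeal to Rosen's concave-game results, whereas you propose to show that the composed best-response map $T(p_h,p_t)=(BR_h(p_t),BR_t(p_h))$ is a contraction and invoke Banach. That route could in principle work, but as written it has a concrete gap: the decisive estimate $|dBR_h/dp_t|<1$ and $|dBR_t/dp_h|<1$ is asserted (``damped by the square-root and logarithmic curvature'') rather than derived, and it is precisely the hard part. The first-order conditions you would implicitly differentiate change form from one KKT regime to another, the best responses need not be differentiable at regime boundaries, and in the budget-binding mixed regime the cross-coupling enters through $A=b_i+Hp_h+\frac{1}{\beta}p_t$ in Eq.~(\ref{case2:lambda}) and must actually be bounded, not just described qualitatively. (In the regimes where $x_i^h$ equals $\sqrt{RNH/p_h}-H$ or $b_i/p_h$ the hash demand is independent of $p_t$, which works in your favor, but that does not cover the coupled case.) Until that bound is established, the uniqueness half of the theorem is not proved. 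For fairness, the paper's own uniqueness claim is also thin---Rosen's uniqueness theorem requires diagonal strict concavity, which the paper never verifies---but that is the paper's route, not yours; your route replaces one unverified condition with a different unverified one.
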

\begin{proof}
	As defined in Problem \ref{pro:2}, the strategy space of the two servers is $[c_h, \frac{RH}{H}]\times[c_t, \alpha\beta]$,
	which is a non-empty, closed and convex subset of the Euclidean space. 
	Next, we calculate the second order derivatives of utility functions $U_h(\cdot)$ and $U_t(\cdot)$.
	
	We first consider the utility of the hash-server. According analysis in Section \ref{subsec:miners}, the amount of purchased 
	resources $x_i^h$ of device $s_i$ from the hash-server must be one of the four following cases: 
	(1) $x_i^h = \sqrt{\frac{RNH}{p_h}} - H$; (2) $x_i^h = \sqrt{\frac{RNH}{p_h + \lambda_1p_h}} - H$, where $\lambda_1 = $
	(\ref{case2:lambda}); (3) $x_i^h = 0$; (4) $x_i^h = \frac{b_i}{p_h}$.
	Then the first order derivative of $x_i^h(\cdot)$ with respect to $p_h$ of this four cases is calculated as:
	\begin{align}
		\frac{\partial x_i^h}{\partial p_h} &= - \frac{1}{2} \sqrt{RNH}  \left(p_h\right)^{-\frac{3}{2}} \\
			&= - \frac{1}{2} \sqrt{\frac{RNH}{1 + \lambda_1}} \left(p_h\right)^{-\frac{3}{2}} , \lambda_1 = (\ref{case2:lambda}) \\
			& = 0 \\
			& = -b_i (p_h)^{-2}.
	\end{align}
	
	The second order derivative of  $x_i^h(\cdot)$ with respect to $p_h$ of this four cases is calculated as:
	\begin{align}
		\frac{\partial^2 x_i^h}{\partial (p_h)^2} &= \frac{3}{4} \sqrt{RNH}  \left(p_h\right)^{-\frac{5}{2}} \\
			&=  \frac{3}{4} \sqrt{\frac{RNH}{1 + \lambda_1}} \left(p_h\right)^{-\frac{5}{2}} , \lambda_1 = (\ref{case2:lambda}) \\
			& = 0 \\
			& = 2 b_i (p_h)^{-3}.
	\end{align}
	
	According to function (\ref{leaderU:hash}), the second order derivative of $U_h(\cdot)$ with respect to $p_h$ is
	\begin{gather}
		\frac{\partial ^2 U_h}{\partial (p_h)^2} = \sum\limits_{s_i \in \mathcal{S}} \left( 2 \frac{\partial x_i^h}{\partial p_h} 
		+ (p_h - c_h) \frac{\partial^2 x_i^h}{\partial (p_h)^2}  \right).
	\end{gather}
	As $p_h - c_h < p_h$,  combing the functions of $\frac{\partial x_i^h}{\partial p_h}$ and $\frac{\partial^2 x_i^h}{\partial (p_h)^2}$,
	we have $\frac{\partial ^2 U_h}{\partial (p_h)^2} \leq 0$. Therefore, the utility function $U_h(\cdot)$ of the hash-server is a 
	concave function with respect to $p_h$.
	
	Similarly, we have $\frac{\partial ^2 U_t}{\partial (p_t)^2} \leq 0$, and we know that the utility function $U_t(\cdot)$ of the 
	task-server is a concave function with respect to $p_t$. Thus the interaction between the two servers form a concave 2-person 
	game. According to \cite{rosen1965existence}, we know that the NE point of the game between the two servers exists and is 
	unique. 
\end{proof}

After the leaders (the two servers) issue their strategies, the followers (IoT devices) will determine their strategies for purchasing 
different resources from the two servers, as is discussed in Section \ref{subsec:miners}. The interaction between servers and 
IoT devices formulates a multi-leader multi-follower Stackelberg game, and the objective is to find the Stackelberg equilibrium (SE)
point of the game, which is defined as Definition \ref{def:stack}.
Next, we will prove that the Stackelberg equilibrium of the game between servers and IoT devices exists and is unique.

\begin{theorem} \label{theorem2}
The SE point of the game between servers and IoT devices exists and is unique.
\end{theorem}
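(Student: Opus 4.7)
My plan is to prove Theorem \ref{theorem2} by backward induction on the two-stage structure of the Stackelberg game, leveraging the results already established for each stage separately. First I will observe that for any fixed pricing profile $(p_h, p_t)$ chosen by the leaders within the admissible rectangle $[c_h, \frac{RN}{H}]\times[c_t, \alpha\beta]$, each follower's profit function $P_i$ is strictly concave with vanishing mixed partial (already verified in Section \ref{subsec:miners}) and is optimized over a non-empty, compact, convex feasible set defined by (\ref{equ:P1_c1})--(\ref{equ:P1_c2}). Strict concavity on such a feasible set guarantees that a maximizer exists and is unique, so each follower admits a single-valued best-response map $\bm{x_i^*}(p_h,p_t)$, whose explicit form is produced by Algorithm \ref{algorithm:miner} through the exhaustive case analysis of the KKT system.

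Next I will handle the upper stage by substituting the followers' best-response functions into $U_h$ and $U_t$, yielding utilities that depend only on $(p_h, p_t)$. Theorem \ref{Theorem:NE} already asserts that the resulting non-cooperative game between the two servers admits a unique Nash equilibrium $(p_h^*, p_t^*)$ on the compact, convex strategy rectangle. I will then define $\bm{X^*} := \{\bm{x_i^*}(p_h^*, p_t^*)\}_{s_i \in \mathcal{S}}$ and verify directly that the triple $(\bm{X^*}, p_h^*, p_t^*)$ satisfies all three inequalities in Definition \ref{def:stack}: the follower inequalities hold because each $\bm{x_i^*}$ is the unique maximizer of $P_i$ at $(p_h^*, p_t^*)$, and the two leader inequalities hold immediately from the Nash property of $(p_h^*, p_t^*)$. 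This establishes existence.

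For uniqueness I will argue by contradiction. If $(\tilde{\bm{X}}, \tilde p_h, \tilde p_t)$ were another SE, then by the lower-stage inequality in Definition \ref{def:stack} each component of $\tilde{\bm{X}}$ must itself be the unique best response to $(\tilde p_h, \tilde p_t)$, so $\tilde{\bm{X}} = \{\bm{x_i^*}(\tilde p_h, \tilde p_t)\}$. Feeding this into the leader inequalities shows that $(\tilde p_h, \tilde p_t)$ is a Nash equilibrium of the reduced upper-stage game, and Theorem \ref{Theorem:NE} forces $(\tilde p_h, \tilde p_t) = (p_h^*, p_t^*)$. The follower strategies are then pinned down by strict concavity, yielding $\tilde{\bm{X}} = \bm{X^*}$.

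The step I expect to be the main obstacle is the legitimacy of this reduction: the best-response map $\bm{x_i^*}(p_h, p_t)$ is defined piecewise across the four KKT branches in Section \ref{subsec:miners}, and I will need to check that the induced utilities $U_h(p_h, p_t)$ and $U_t(p_h, p_t)$ remain the concave functions used in the proof of Theorem \ref{Theorem:NE} across the seams between cases. Once this piecewise consistency is verified, the remainder of the argument is direct bookkeeping against Definition \ref{def:stack}.
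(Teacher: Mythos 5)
Your proposal is correct and follows essentially the same route as the paper's proof: backward induction using the uniqueness of each follower's best response (guaranteed by the strict concavity of $P_i$ over the compact convex budget set established in Section \ref{subsec:miners}) combined with the unique Nash equilibrium of the reduced leader game from Theorem \ref{Theorem:NE}. Your write-up is in fact considerably more careful than the paper's two-sentence argument---notably in spelling out the verification of Definition \ref{def:stack} and in flagging the piecewise seams of the best-response maps---but the underlying decomposition is identical.
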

\begin{proof}
	As analyzed in Section \ref{subsec:miners}, each IoT device will find its optimal strategy in one of the three cases from Case 2
	to Case 4, which indicates that the strategy of each IoT device is unique after the two servers give their pricing strategies. 
	According to Theorem \ref{Theorem:NE}, the game between the two servers has a unique NE point, we thus can conclude that 
	the SE point of the game between servers and IoT devices exists and is unique.
\end{proof}

To find the NE point of the game between the two servers, based on the optimal strategies of IoT devices, we proposed an algorithm
$\sf{FNES}$ to find the final pricing strategies of the two servers. The $\sf{FNES}$ is based on sub-gradient technique 
\cite{boyd2004convex,zhang2016multi}, and it invokes $\sf{FOSD}$ as its subroutine. 
The pseudo-code of algorithm $\sf{FNES}$ is shown in Algorithm \ref{algorithm:leader}.

\begin{algorithm} 
	\caption{Find Nash Equilibrium for Servers. ($\sf{FNES}$)}
	\label{algorithm:leader}
	\begin{algorithmic}[1]
		\Require The game between the two servers and IoT devices;
		\Ensure The pricing strategy $(p_h, p_t)$ for the two servers;
		\State Initialize: $p_h = \frac{1}{2}(c_h + \frac{RN}{H})$, $p_t = \frac{1}{2}(c_t + \alpha\beta)$;
		\State Set a small step $\Delta$, and the attenuation coefficient $\delta$ of the step;
		\While{true}
			\State $p_h' = p_h$, $p_t' = p_t$;
			\Statex // \textbf{Adjust strategy for the hash-server}
			\State Calculate $U_h(p_h, p_t)$, $U_h(p_h + \Delta, p_t)$ and $U_h(p_h - \Delta, p_t)$ by invoking 
			Algorithm $\sf{FOSD}$ for each IoT device with parameters $(p_h, p_t)$, $(p_h + \Delta, p_t)$ and 
			$(p_h - \Delta, p_t)$, respectively;
			\If{$U_h(p_h + \Delta, p_t) \geq U_h(p_h, p_t)$ \&\& $U_h(p_h + \Delta, p_t) \geq U_h(p_h - \Delta, p_t)$}
				\State $p_h = \min\{p_h + \Delta, \frac{RN}{H}\}$;
			\ElsIf{$U_h(p_h - \Delta, p_t) \geq U_h(p_h, p_t)$ \&\& $U_h(p_h - \Delta, p_t) \geq U_h(p_h + \Delta, p_t)$}
				\State $p_h = \max\{p_h - \Delta, c_h\}$;
			\EndIf
			
			\Statex // \textbf{Adjust strategy for the tash-server}
			\State Calculate $U_t(p_t, p_h)$, $U_t(p_t + \Delta, p_h)$ and $U_t(p_t - \Delta, p_h)$ by invoking 
			Algorithm $\sf{FOSD}$ for each IoT device with parameters $(p_t, p_h)$, $(p_t + \Delta, p_h)$ and 
			$(p_t - \Delta, p_h)$, respectively;
			\If{$U_t(p_t + \Delta, p_h) \geq U_t(p_t, p_h)$ \&\& $U_t(p_t + \Delta, p_h) \geq U_t(p_t - \Delta, p_h)$}
				\State $p_t = \min\{p_t + \Delta, \alpha\beta\}$;
			\ElsIf{$U_t(p_t - \Delta, p_h) \geq U_t(p_t, p_h)$ \&\& $U_t(p_t - \Delta, p_h) \geq U_t(p_t + \Delta, p_h)$}
				\State $p_t = \max\{p_t - \Delta, c_t\}$;
			\EndIf
			
			\If{$p_h' = p_h$ \&\& $p_t' = p_t$}
				\State Break;
			\EndIf
			
			\Statex // \textbf{Reduce the step}
			\State $\Delta = \delta \cdot \Delta$;
		\EndWhile \\
		\Return $(p_h, p_t)$
	\end{algorithmic}
\end{algorithm}

In $\sf{FNES}$, we first set a feasible pricing strategy for each server,  and a small step $\Delta$ is set to update the strategies 
of servers. We iteratively adjust the pricing strategy for each server in an alternative way. For the hash-server, we will calculate 
its utility $U_h$ with pricing strategies $p_h$, $p_h + \Delta$ and $p_h - \Delta$, and the best pricing strategy will be selected
as the latest strategy in the next round. Note that, the value of $U_h$ is related to the strategy of each IoT device, and thus we 
need to invoke the $\sf{FOSD}$ algorithm to get the strategy of each IoT device. 
The strategy adjustment of the task-server is similar to that of the hash-server.
In each iteration, we update the step $\Delta$ with the attenuation coefficient $\delta$, where $\delta \in (0,1)$.
The $\sf{FNES}$ terminates when none of the servers will change its pricing strategy.


\section{Simulations}\label{sec:Simulation}
In this section, we conduct numerical experiments to validate the feasibility and effectiveness of our algorithms.

\subsection{Experimental settings}
In the experiments, we assume the total hash computational power $H$ of the IoT blockchain network in the next period is estimated to 
be $1000$ $GH/s$. The mining reward $R$ from the blockchain platform is set to be $300$, and the number of generated new blocks 
per day $N$ is set to be $144$. For the profit function of performing tasks, we set $\alpha$ to be $40$, and $\beta$ to be $2$. 
The unit resource cost of the hash-server and the task-server is set to be 10, that is, $c_h = c_t = 10$.
We consider there are $5$ IoT devices in the IoT system that
would like to purchase resources from the two servers, the budget of each device is
$50, 60, 70, 80$, and $90$, respectively.
For the algorithm $\sf{FNES}$, we set the set $\Delta$ to be $1$, and the attenuation coefficient $\delta$ to be $0.99$.
Unless other declared, the above are the default settings of our experiments.

\subsection{Results and Analyses}
\emph{1) The convergence of algorithm $\sf{FNES}$:}
In algorithm $\sf{FNES}$, we set the default initial value of the pricing strategies of the two servers as 
$p_h = \frac{1}{2}(c_h + \frac{RN}{H}) = 26.6$ and $p_t = \frac{1}{2}(c_t + \alpha\beta) = 45$. 
As shown in Fig. \ref{fig:itera-mid}, after $23$ iterations, the interaction between the two servers reach the Nash Equilibrium.
However, when we set the initial value of the pricing strategies of the two servers as
$p_h = \frac{RN}{H} = 43.2$ and $p_t = \alpha\beta=80$, as shown in Fig. \ref{fig:itera-high}, 
it needs about $130$ iterations to reach the Nash Equilibrium, which is much slower than that in Fig. \ref{fig:itera-mid}.
This indicates that different initialization will significantly affect the convergence speed of algorithm $\sf{FNES}$. 
We can also see that Fig. \ref{fig:itera-mid} and Fig. \ref{fig:itera-high} reach the same Nash Equilibrium even though the initialization
is different, which implies the correctness of our analysis in Theorem \ref{theorem2}.

We investigate the effect of the step $\Delta$ on the convergence of algorithm $\sf{FNES}$ in Fig. \ref{fig:itera-delta}.
It can be seen that it needs more iterations to reach the Nash Equilibrium when we adopt a smaller $\Delta$. 
However, although using a larger $\Delta$ can quickly approach the equilibrium point, we have to wait for more iterations if we want 
to get a more accurate solution, as we need to wait for the step $\Delta$ to decay to a sufficiently small level. 
Therefore, how to select a suitable value of $\Delta$ depends on the trade-off between the convergence speed and solution
accuracy. If we care more about the convergence speed other than the accuracy, we should adopt a relatively large $\Delta$, 
otherwise, we should adopt a small $\Delta$. 
There exists an alternative way to quickly reach a more accurate equilibrium point, that is, we use a large $\Delta$ to quickly 
approach the equilibrium point, and then set the $\Delta$ to a small value to improve the accuracy.

\begin{figure}
	\centering
	\subfigure[Initialization:($p_h$=26.6, $p_t$=45)]{ \label{fig:itera-mid}
		\includegraphics[width=.22\textwidth]{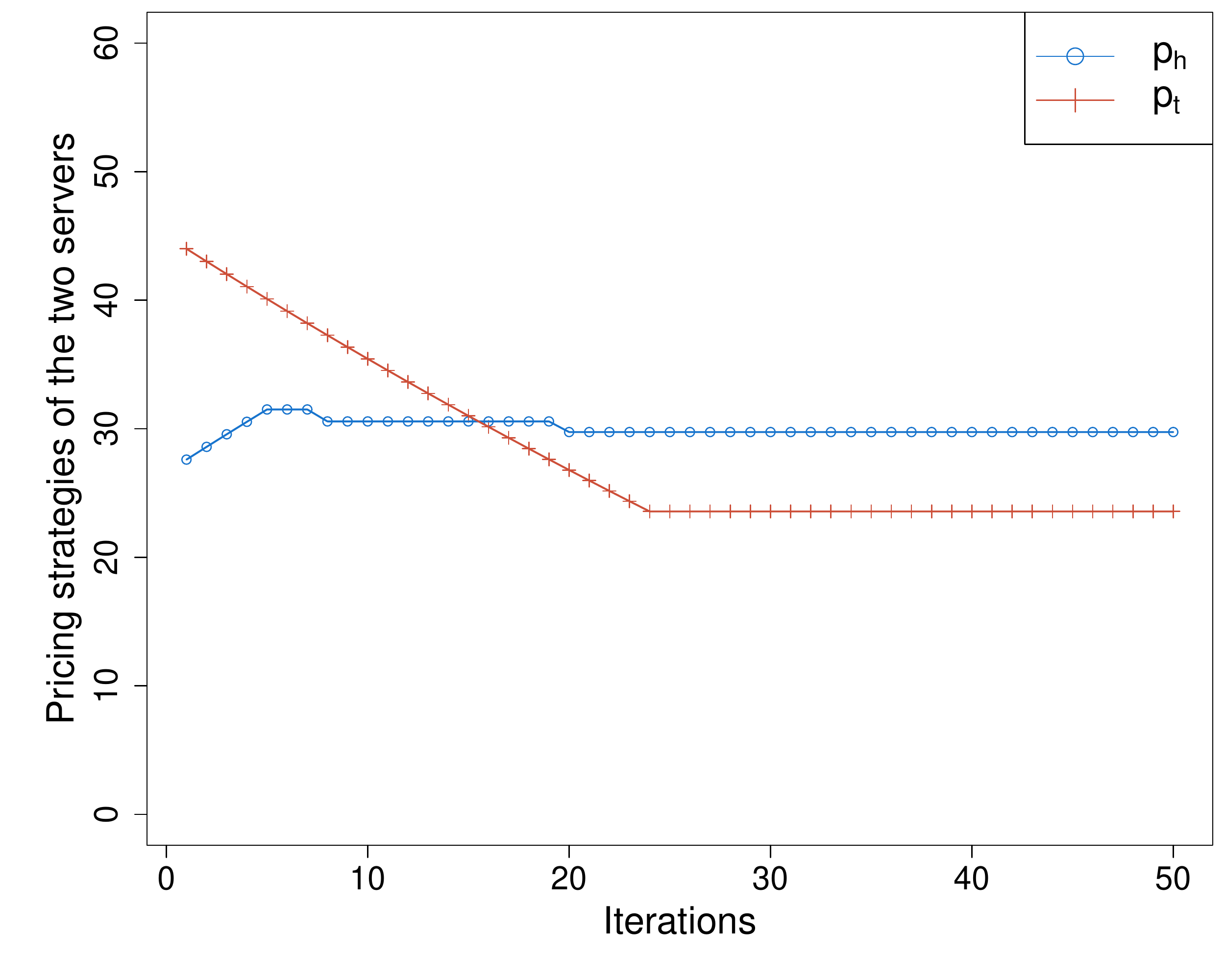}
	}
	\subfigure[Initialization:($p_h$=43.2, $p_t$=80)]{ \label{fig:itera-high}
		\includegraphics[width=.22\textwidth]{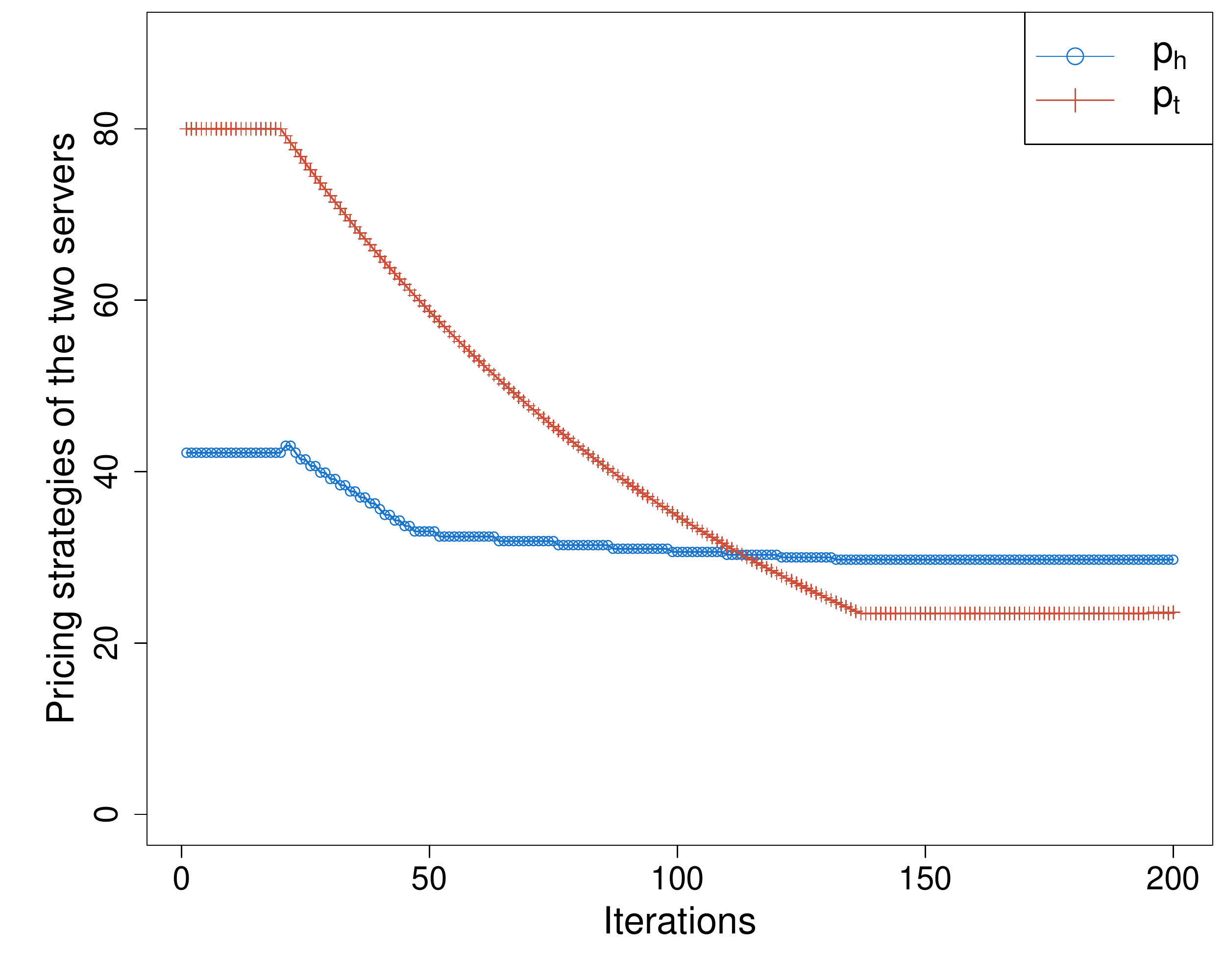}
	}
	\caption{The convergence process of $\sf{FNES}$ with different initialization.}
	\label{fig:itera-ini}
\end{figure}

\begin{figure}
	\centering
	\subfigure[$\Delta = 0.3$]{ \label{fig:itera-d3}
		\includegraphics[width=.22\textwidth]{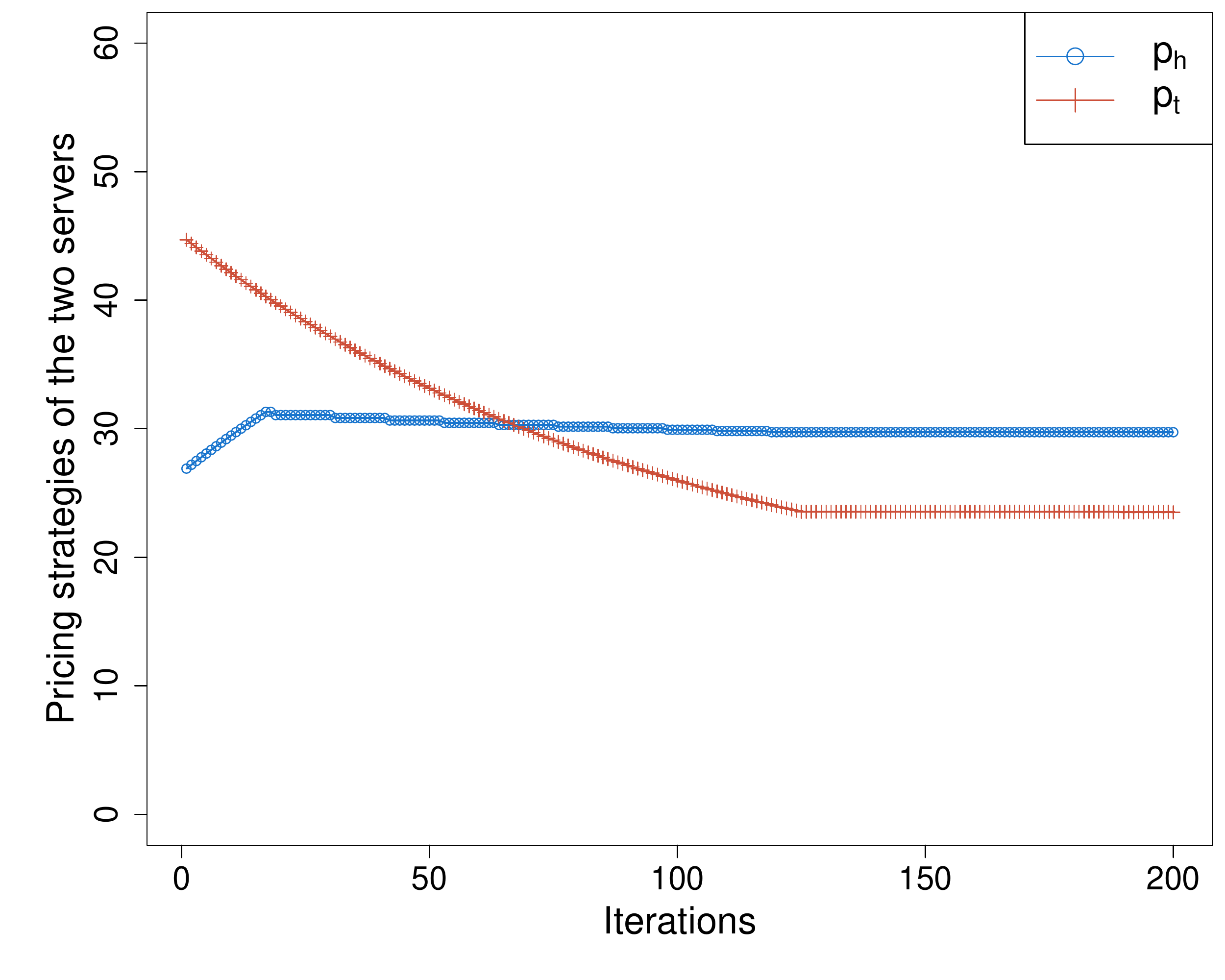}
	}
	\subfigure[$\Delta = 1$]{ \label{fig:itera-d1}
		\includegraphics[width=.22\textwidth]{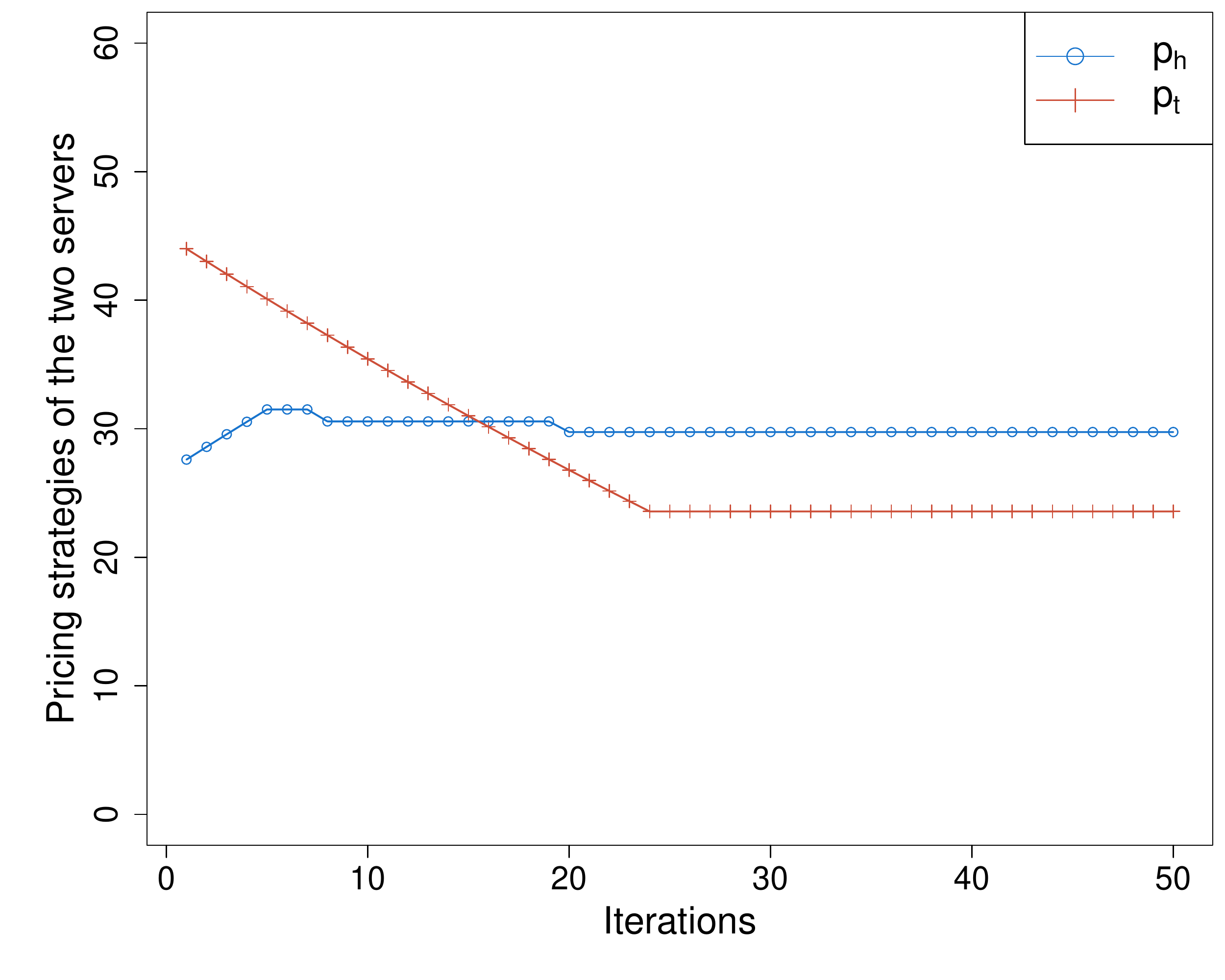}
	}
	\caption{The convergence process of $\sf{FNES}$ with different $\Delta$.}
	\label{fig:itera-delta}
\end{figure}

\begin{figure*}
	\centering
	\subfigure[]{ \label{fig:reward-price}
		\includegraphics[width=.22\textwidth]{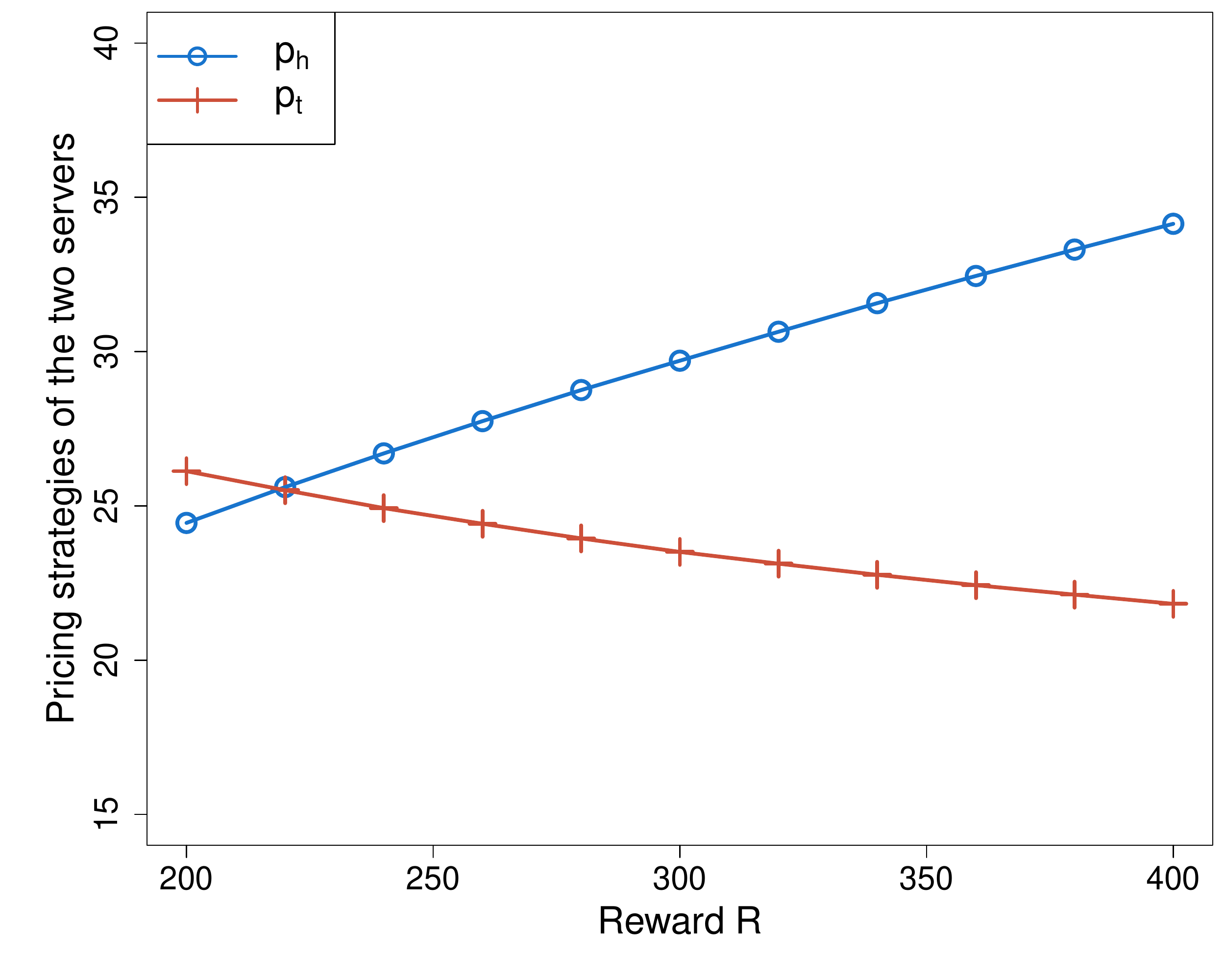}
	}
	\subfigure[]{ \label{fig:reward-utility}
		\includegraphics[width=.22\textwidth]{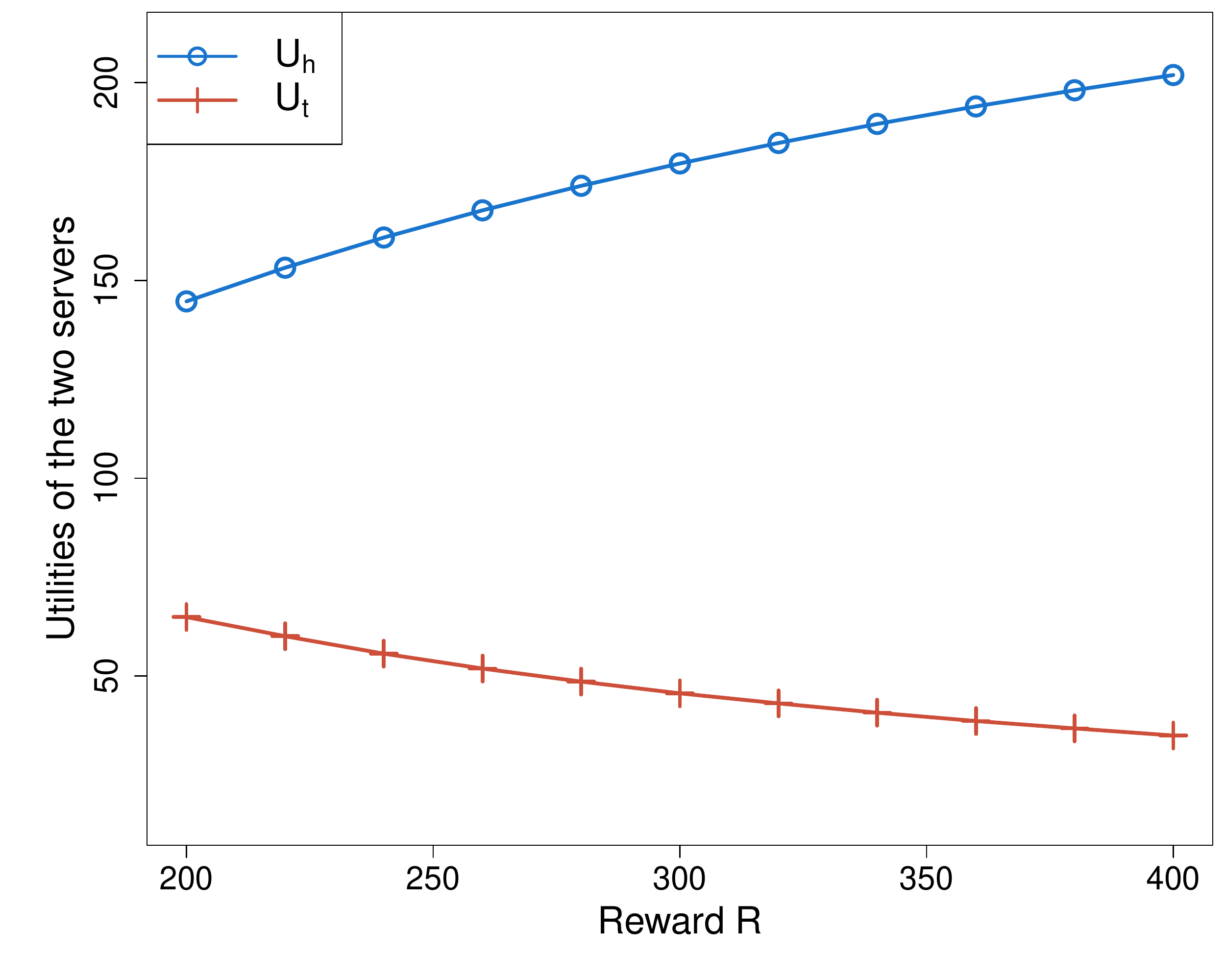}
	}
	\subfigure[]{ \label{fig:reward-power}
		\includegraphics[width=.22\textwidth]{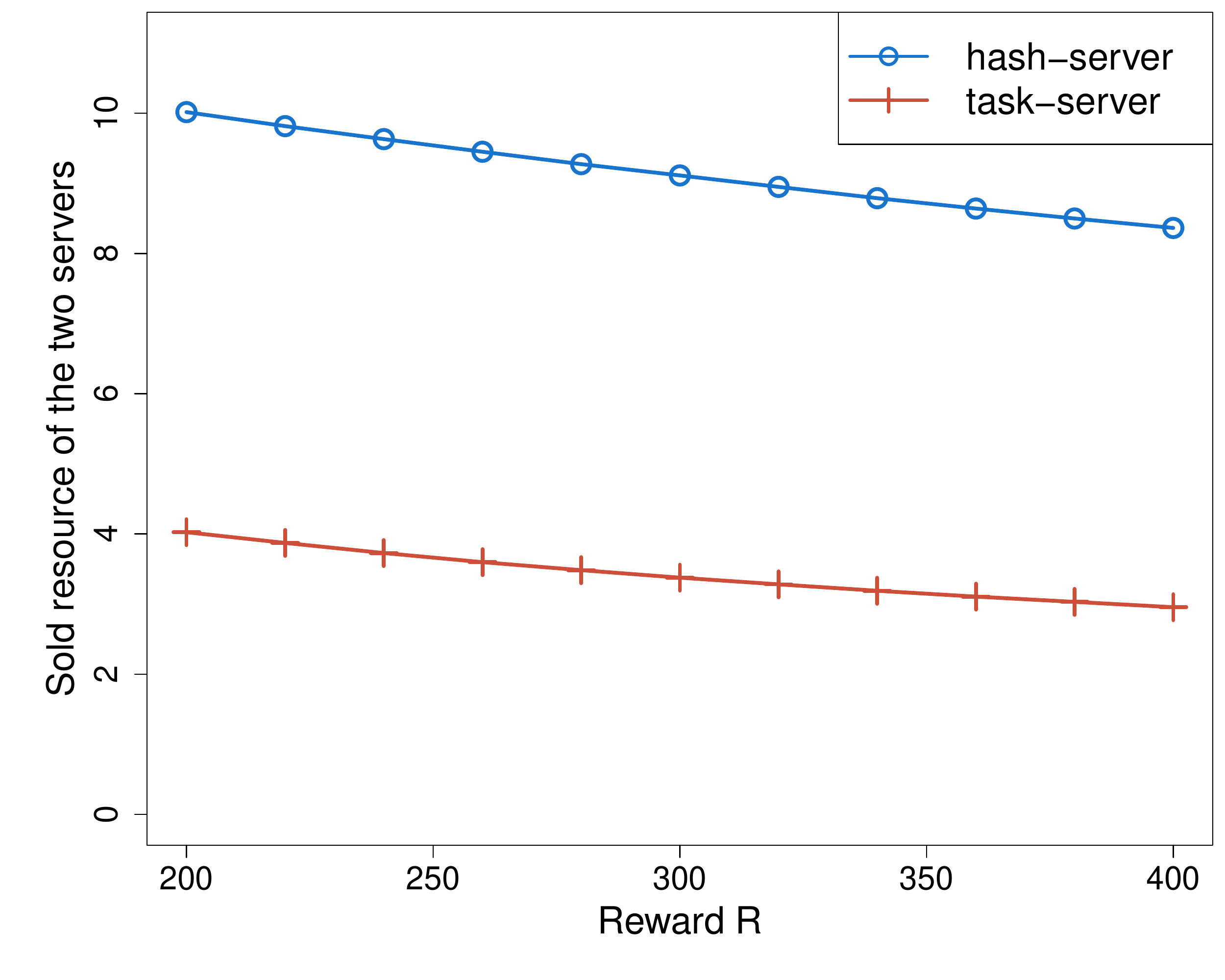}
	}
	\subfigure[]{ \label{fig:reward-profit}
		\includegraphics[width=.22\textwidth]{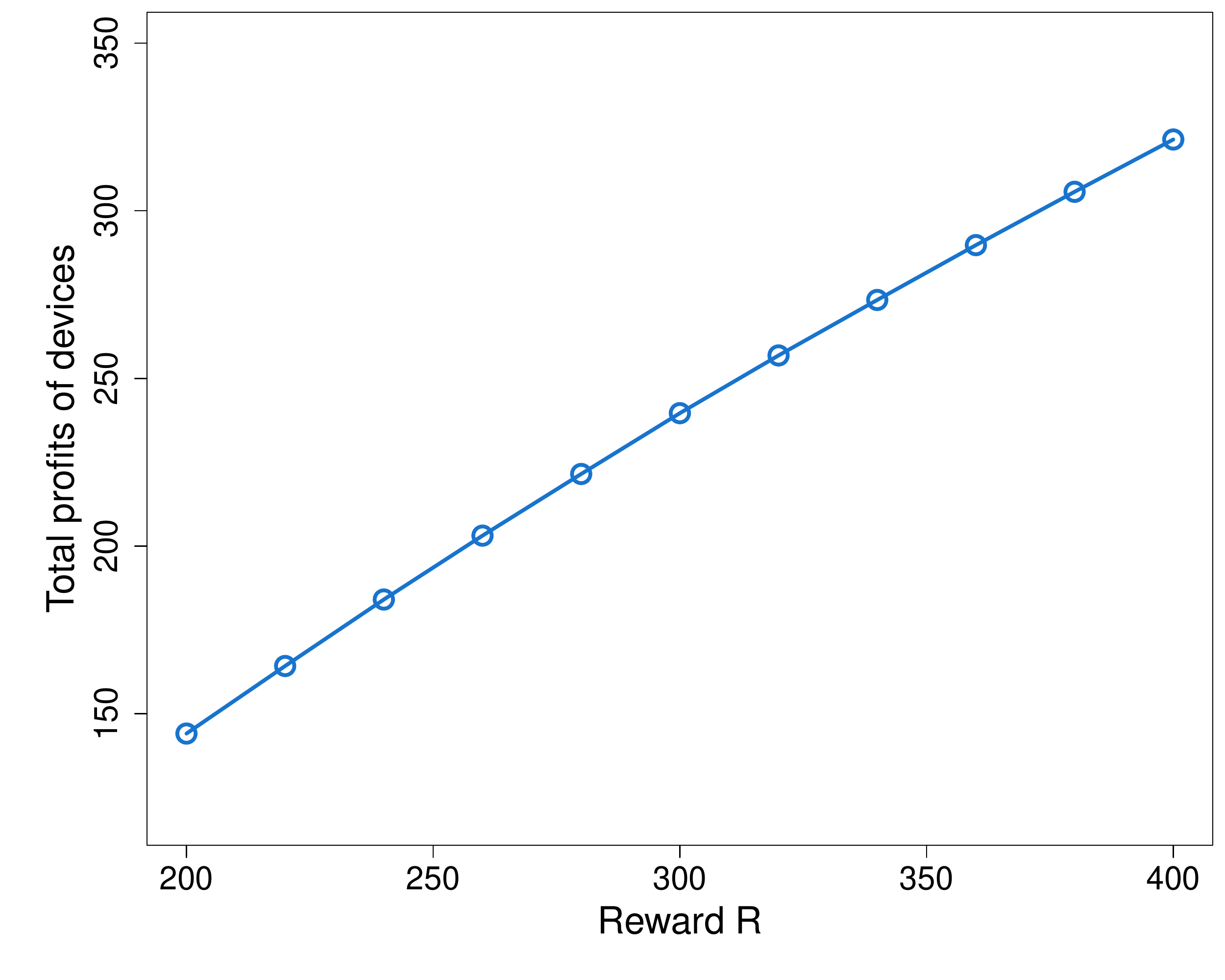}
	}
	\caption{The effect of the $R$ on (a) pricing strategies of servers, (b) utilities of servers, (c) total sold resources of servers and (d) total profits of devices.}
	\label{fig:reward-in}
\end{figure*}

\begin{figure*}
	\centering
	\subfigure[]{ \label{fig:ch-price}
		\includegraphics[width=.22\textwidth]{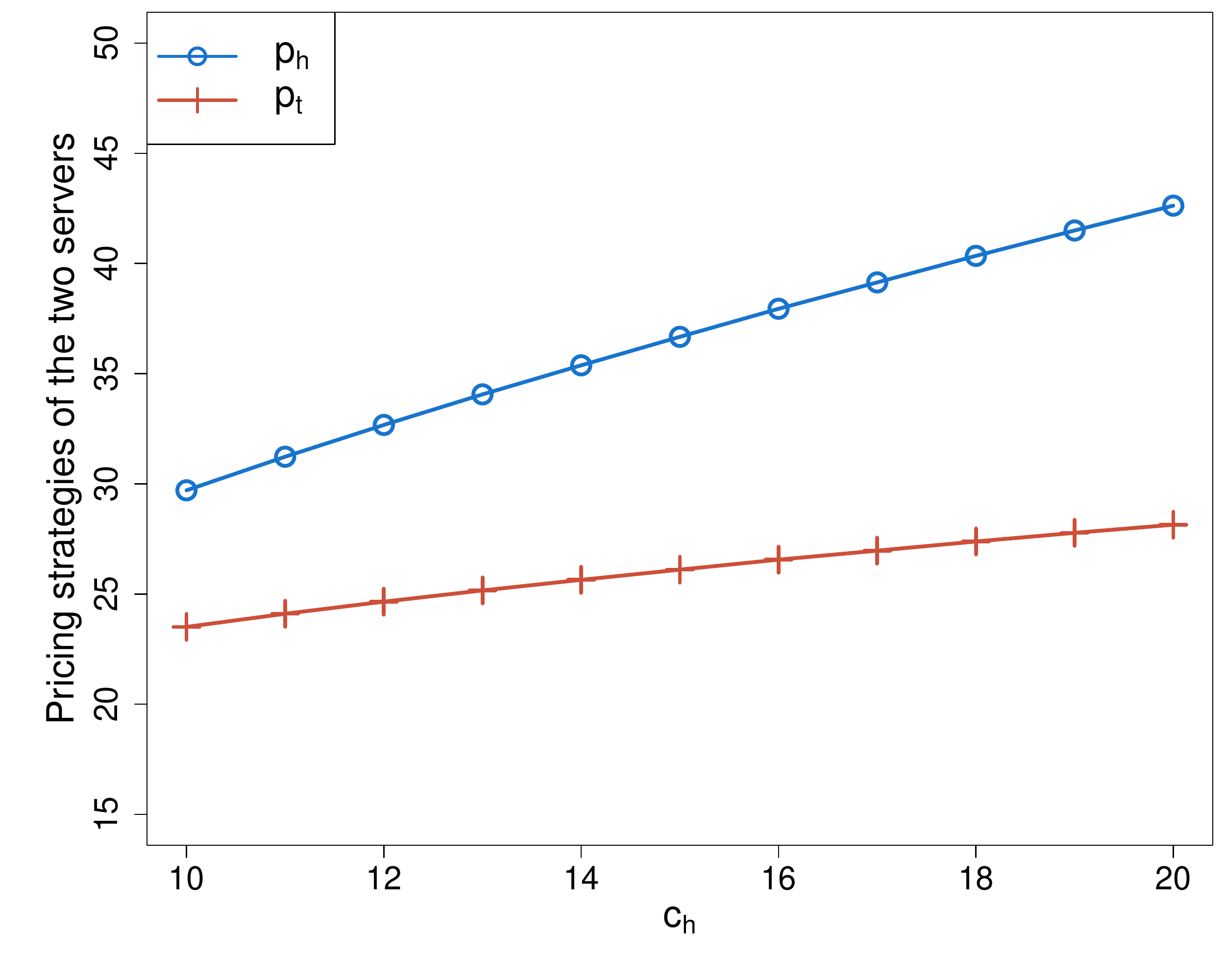}
	}
	\subfigure[]{ \label{fig:ch-utility}
		\includegraphics[width=.22\textwidth]{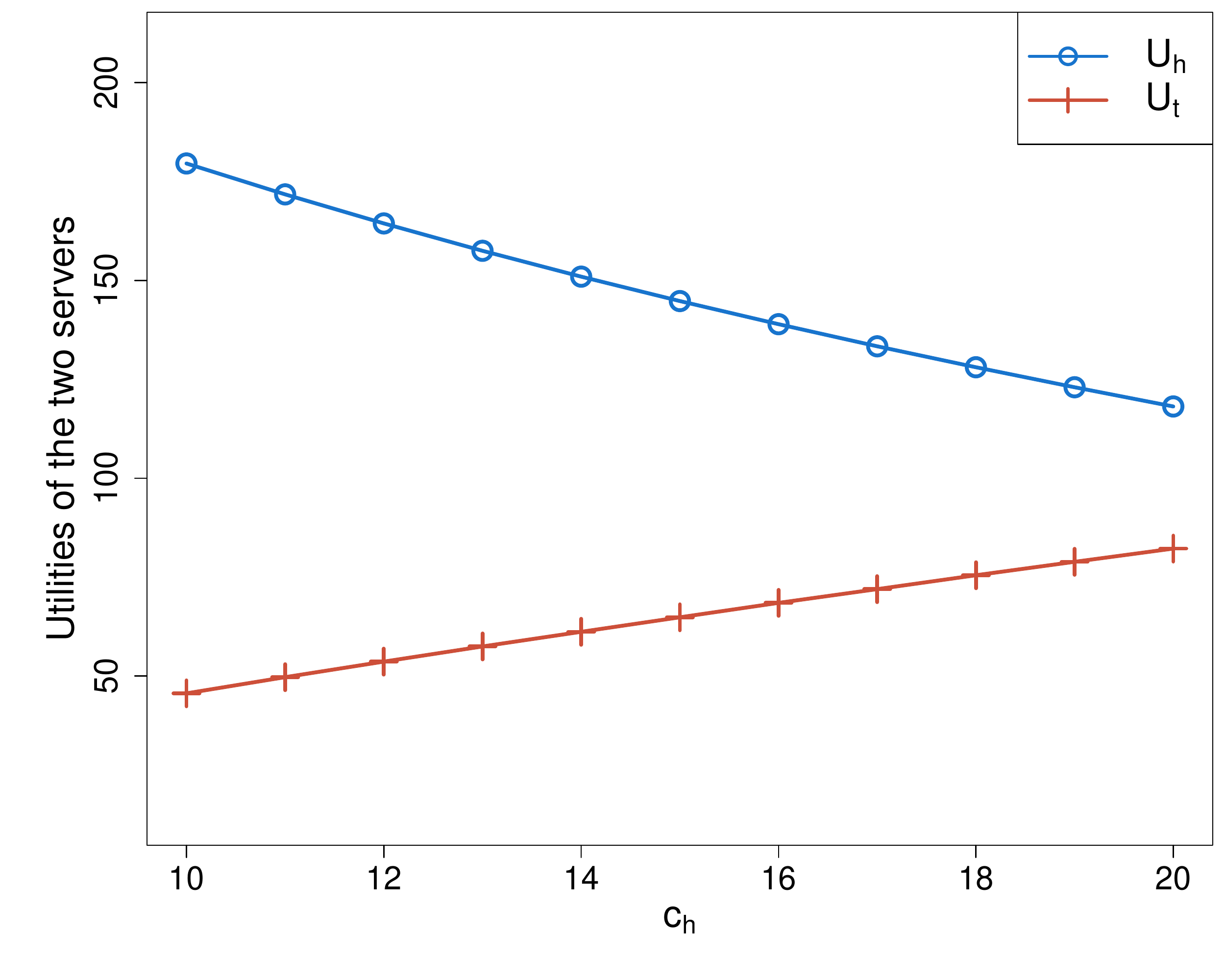}
	}
	\subfigure[]{ \label{fig:ch-power}
		\includegraphics[width=.22\textwidth]{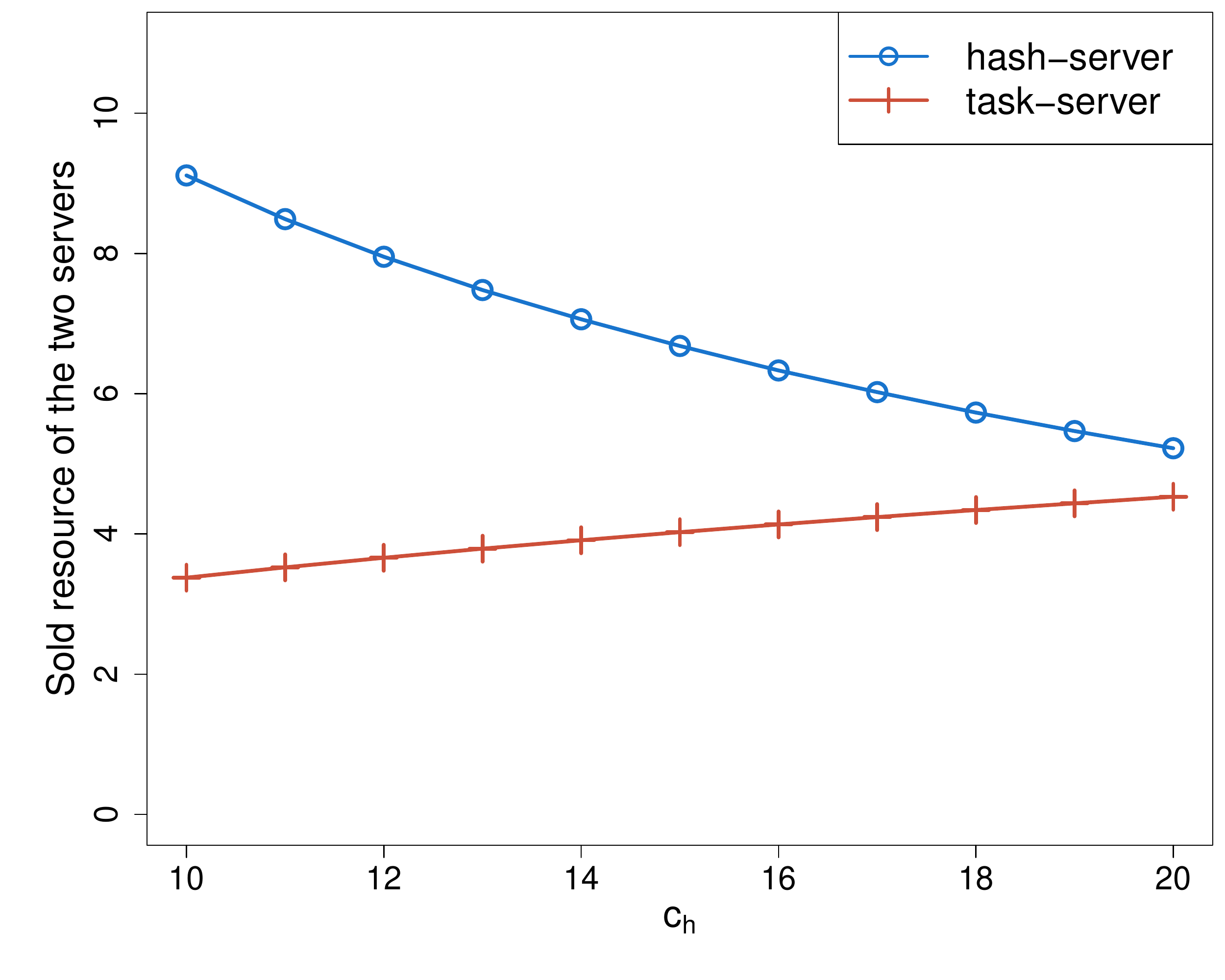}
	}
	\subfigure[]{ \label{fig:ch-profit}
		\includegraphics[width=.22\textwidth]{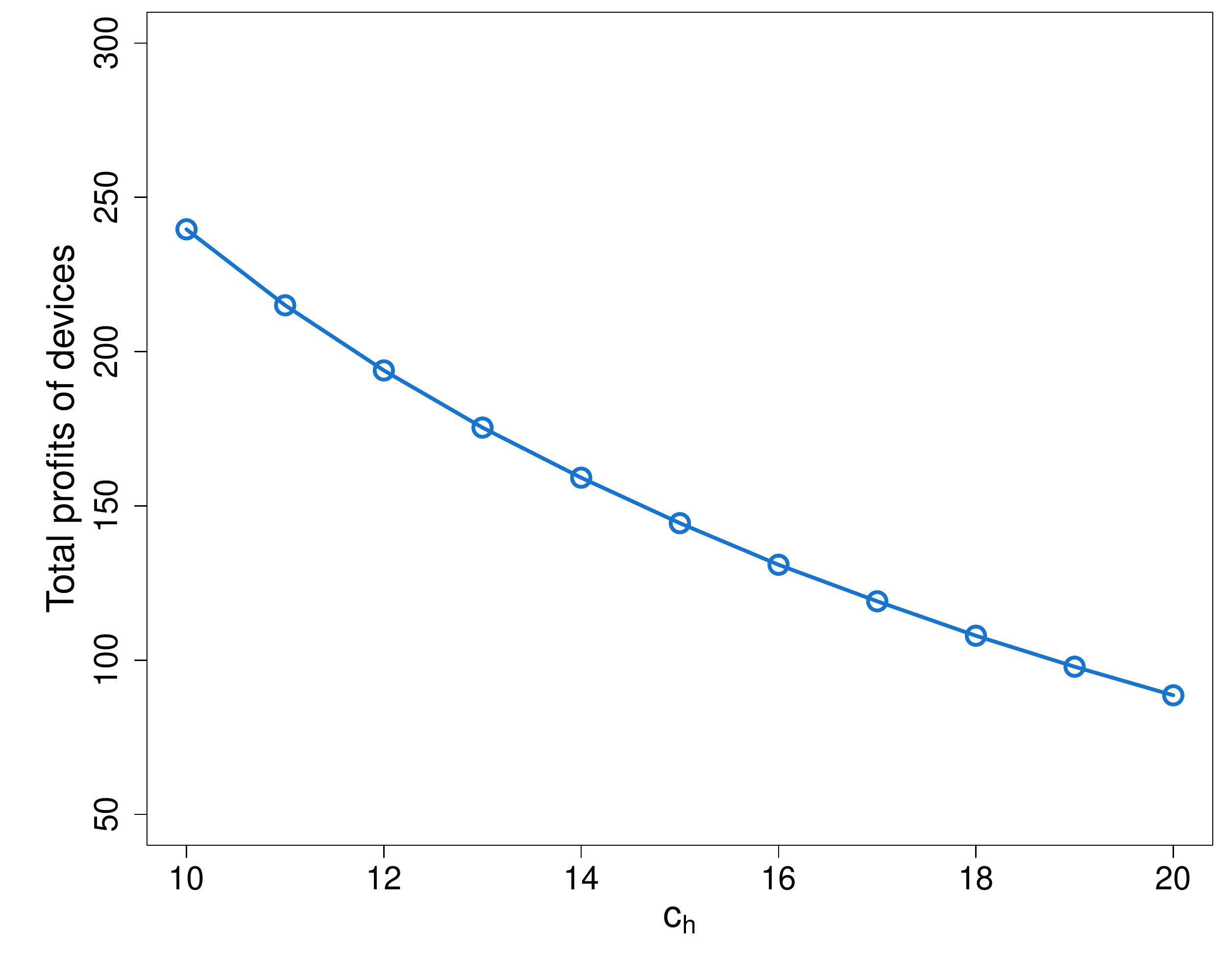}
	}
	\caption{The effect of $c_h$ on (a) pricing strategies of servers, (b) utilities of servers, (c) total sold resources of servers and (d) total profits of devices.}
	\label{fig:Ch-in}
\end{figure*}

\emph{2) The effect of the mining reward $R$:}
We investigate the effect of the mining reward $R$ on the final solution of our problem, as shown in Fig. \ref{fig:reward-in}.
When the mining reward $R$ increases from $200$ to $400$, the miners (IoT devices) will get more profit from the mining process,
and thus these devices prefer to spend their budget on purchasing hash computational power. Therefore, the hash-server
will give a higher price to get a larger utility, and the task-server has to lower its price to attract the devices. The results are shown 
in Fig. \ref{fig:reward-price} and Fig. \ref{fig:reward-utility}. 
From Fig. \ref{fig:reward-power} we can see that the total purchased hash resource of devices decreases as the reward $R$ increases.
This is because the price of the hash resource has been raised up, and the devices have limited budgets. 
It indicates that if the blockchain platform wants to attract miners to contribute more computational power by increasing the mining 
reward, it may have an opposite effect. 
Devices will get more profit as the mining reward $R$ increases, as shown in Fig. \ref{fig:reward-profit}.

\emph{3) The effect of the unit resource cost of the servers:}
We keep the unit resource cost $c_t$ of the task-server unchanged, and increase the unit resource cost $c_h$ of the hash-server 
from $10$ to $20$. 
As the unit resource cost $c_h$ raised up, the hash-server will raise its resource price to get a larger utility. Thus devices will 
allocate more budget to purchase resources from the task-server, and then the task-server will raise its resource price as it is more
competitive. The results are shown in Fig. \ref{fig:ch-price}.
The total purchased hash resource of devices will decrease due to the above reasons, meanwhile, devices will purchase more 
resource from the task-server, as shown in Fig. \ref{fig:ch-power}.
The utility of the hash-server decreases with increasing the unit resource cost $c_h$, the reason is that the total sold resources of 
the hash-server decreases as $c_h$ increases, even though the price $p_h$ has risen, the value $p_h - c_h$ almost unchanged.
The results are shown in Fig. \ref{fig:ch-utility}. 
As both of the two servers will bid a higher price as $c_h$ increases, devices will get less profit, as shown in Fig. \ref{fig:ch-profit}.
From Fig. \ref{fig:Ch-in}, we can conclude that if the server could reduce its unit resource cost, it will be more competitive and thus
obtain more benefits.

\emph{4) The effect of the budget of devices:}
If we increase the budget $b_5$ of device $s_5$ from $50$ to $190$, while the budgets of other devices keep unchanged, 
the profit gets by $s_5$ will increase because it can purchase more resources from the two servers, while the profits of
other devices decrease, as shown in Fig. \ref{fig:budget-profit}. The reason is that the increment of the budget $b_5$ will cause 
servers to raise their resource prices, which in turn reduced the amount of resources purchased by other devices, 
as shown in Fig. \ref{fig:budget-price}. 
The result indicates that the budget of devices will affect each other's profit in an indirect way.

\begin{figure}
	\centering
	\subfigure[]{ \label{fig:budget-profit}
		\includegraphics[width=.22\textwidth]{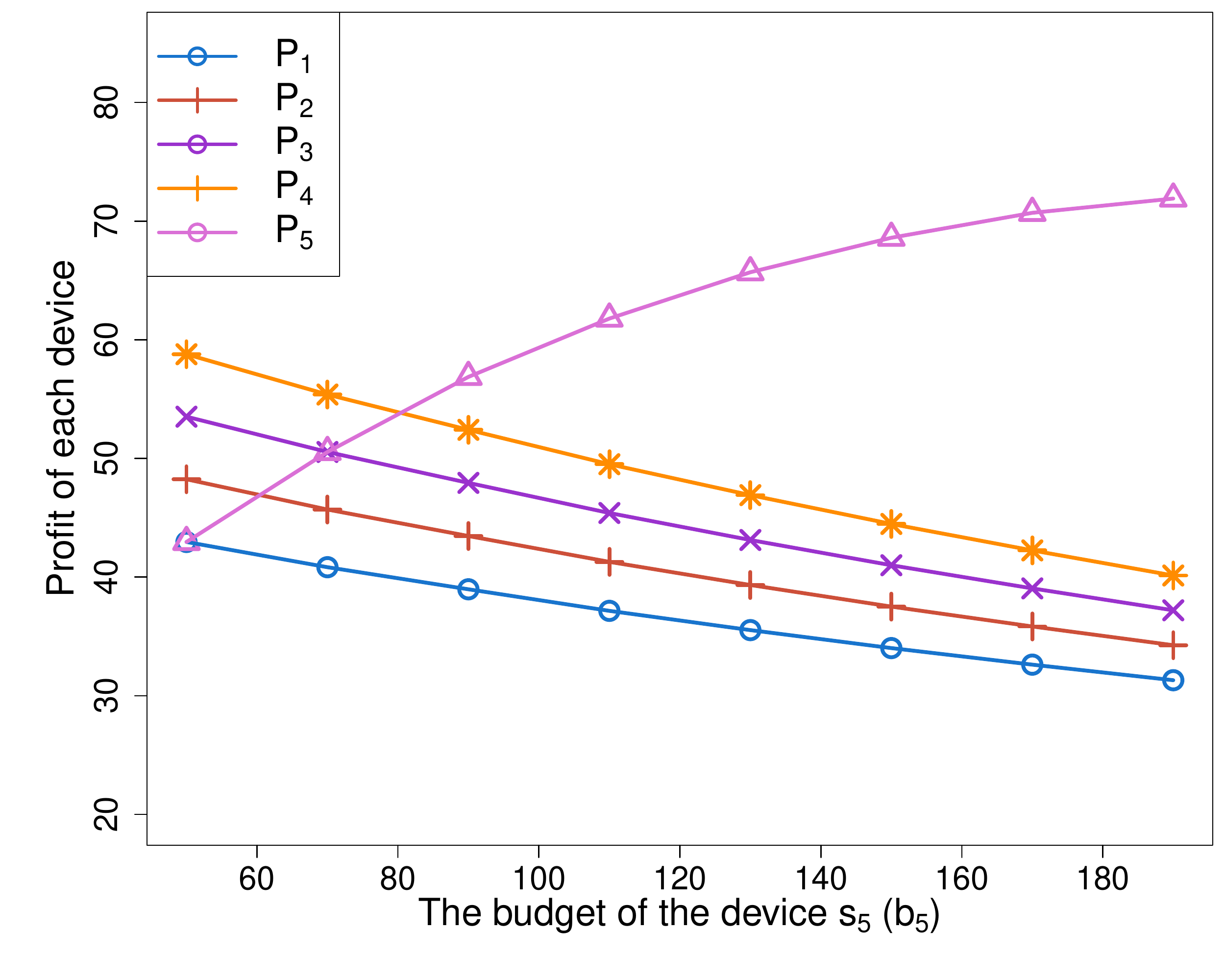}
	}
	\subfigure[]{ \label{fig:budget-price}
		\includegraphics[width=.22\textwidth]{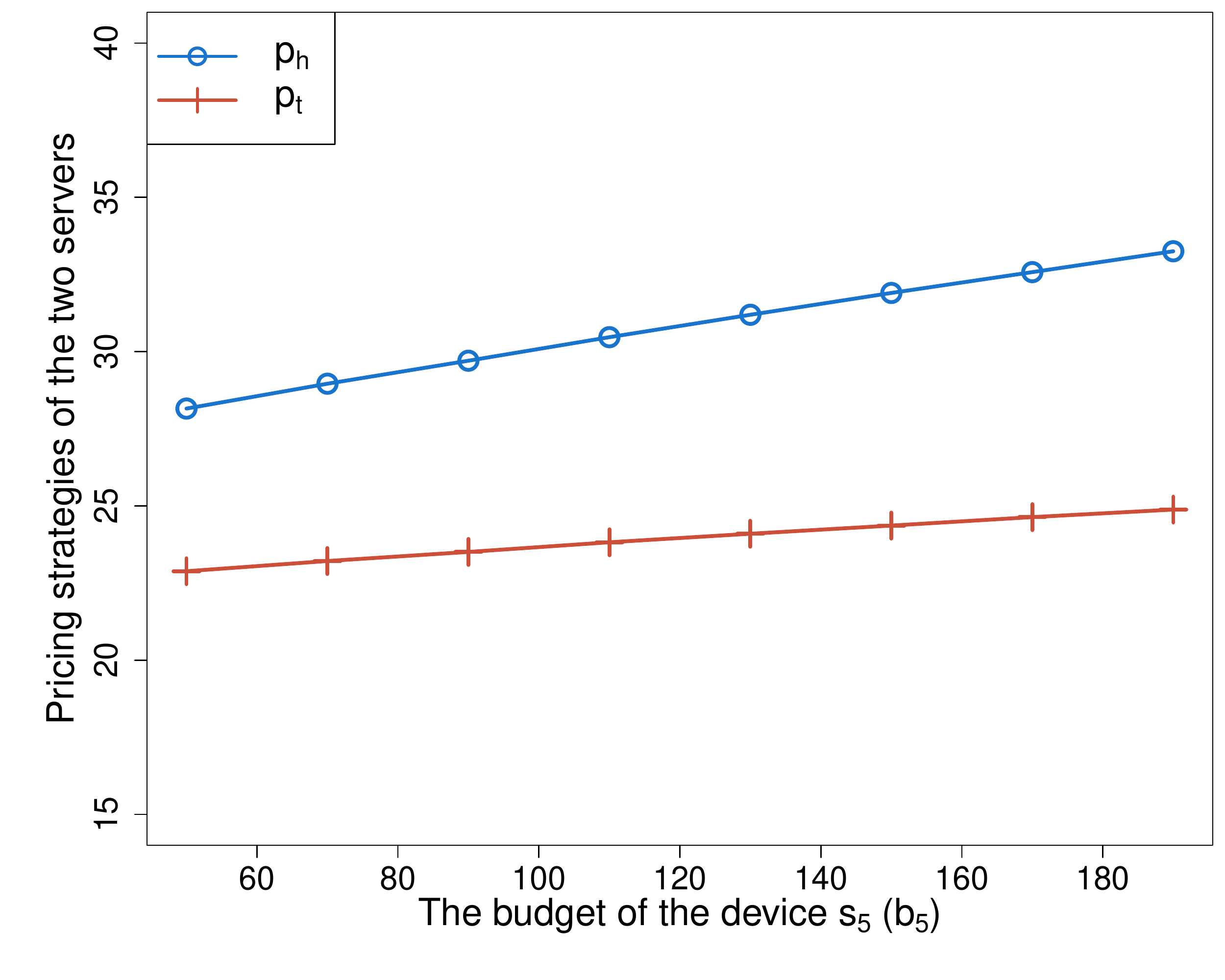}
	}
	\caption{The effect of the budget $b_5$ of device $s_5$.}
	\label{fig:budget}
\end{figure}

\section{Conclusion}\label{sec:Conclusions}
In this paper, we study the pricing and budget allocation problem between edge servers and IoT devices in an IoT blockchain network.
We first introduce the architecture of IoT blockchain with edge computing, and describe the operation of the IoT blockchain system.
Then, we model the interaction between edge servers and IoT devices as a multi-leader multi-follower Stackelberg game.
We prove the existence and uniqueness of the Stackelberg equilibrium, and design efficient algorithms to get the Stackelberg 
equilibrium point. 
Finally, we validate the correctness and effectiveness of our designs by conducting extensive simulations. 

\ifCLASSOPTIONcaptionsoff
  \newpage
\fi



%
\bibliographystyle{IEEEtran}
\bibliography{paper}

%

\begin{IEEEbiography}[{\includegraphics[width=1in,height=1.25in,clip,keepaspectratio]{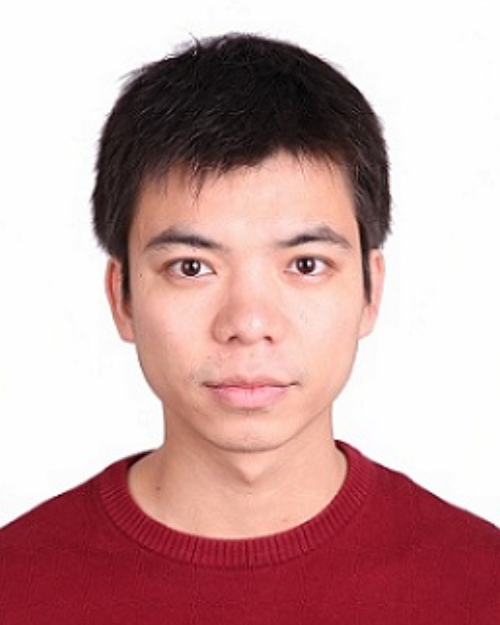}}]{Xingjian Ding}
Xingjian Ding received the BE degree in electronic information engineering from Sichuan University, Sichuan, China, in 2012. He received the M.S. degree in software engineering from Beijing Forestry University, Beijing, China, in 2017. Currently, he is working toward the PhD degree in the School of Information, Renmin University of China, Beijing, China. His research interests include wireless rechargeable sensor networks, algorithm design and analysis, and blockchain.
\end{IEEEbiography}

\begin{IEEEbiography}[{\includegraphics[width=1in,height=1.25in,clip,keepaspectratio]{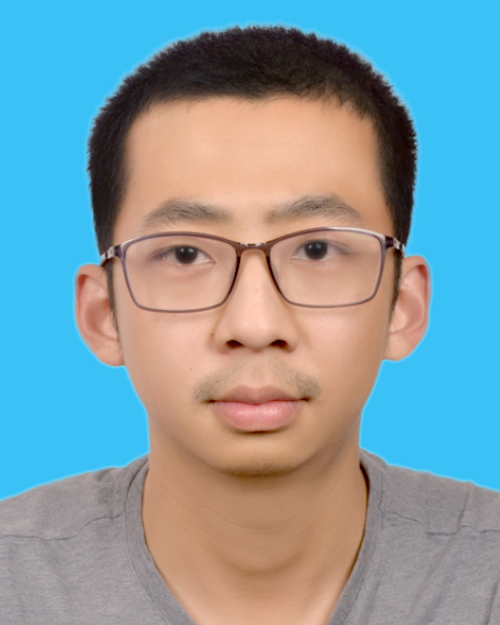}}]{Jianxiong Guo}
Jianxiong Guo is a Ph.D candidate in the Department of Computer Science at the University of Texas at Dallas. He received his BS degree in Energy Engineering and Automation from South China University of Technology in 2015 and MS degree in Chemical Engineering from University of Pittsburgh in 2016. His research interests include social networks, data mining, IoT application, blockchain, and combinatorial optimization.
\end{IEEEbiography}

\begin{IEEEbiography}[{\includegraphics[width=1in,height=1.25in,clip,keepaspectratio]{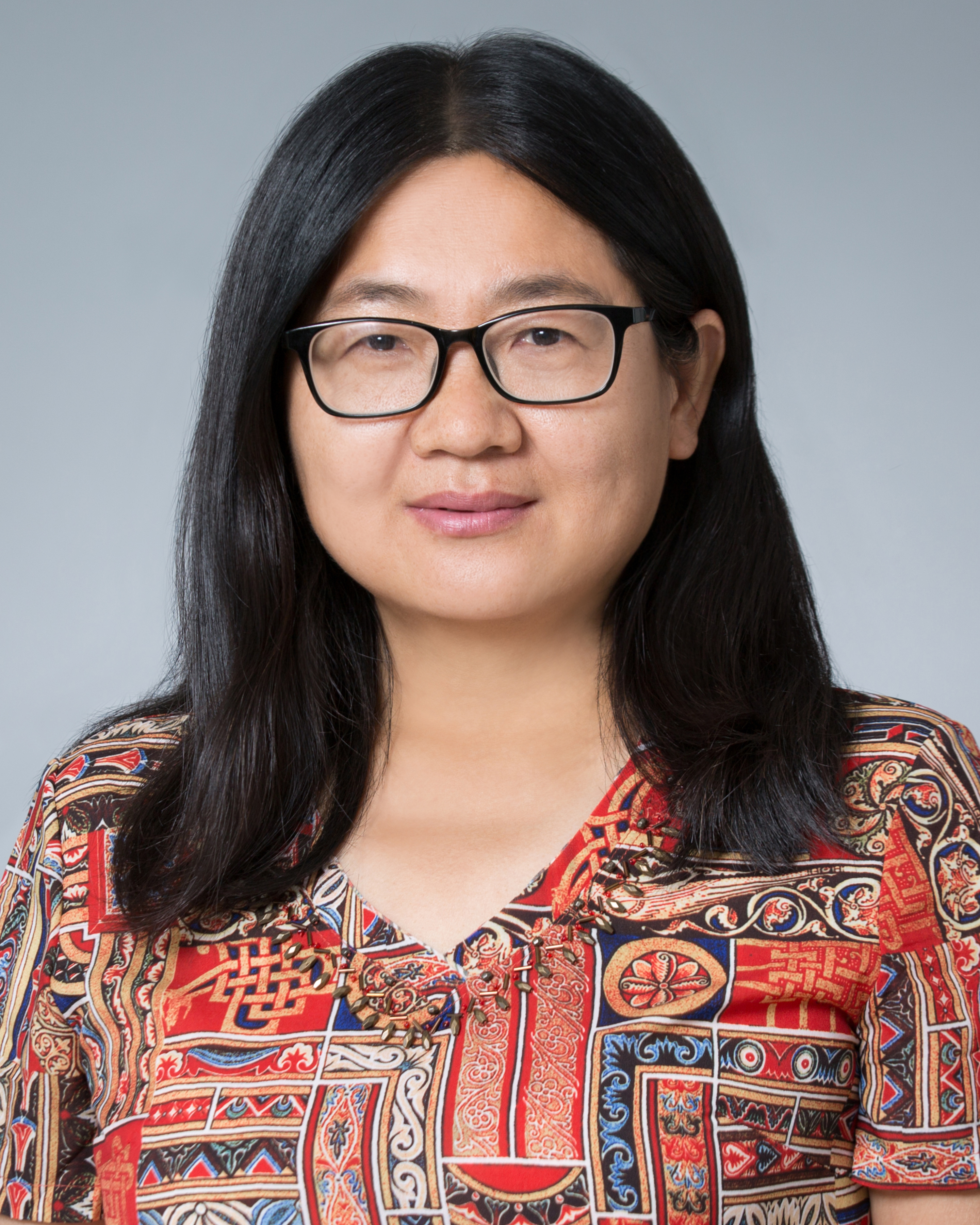}}]{Deying Li}
Deying Li is a professor of Renmin University of China. She received the B.S. degree and M.S. degree in Mathematics from Huazhong Normal University, China, in 1985 and 1988 respectively. She obtained the PhD degree in Computer Science from City University of Hong Kong in 2004. Her research interests include wireless networks, ad hoc \& sensor networks mobile computing, distributed network system, Social Networks, and Algorithm Design etc.
\end{IEEEbiography}

\begin{IEEEbiography}[{\includegraphics[width=1in,height=1.25in,clip,keepaspectratio]{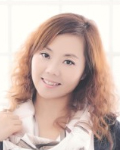}}]{Weili Wu}
Weili Wu received the Ph.D. and M.S. degrees from the Department of Computer Science, University of Minnesota, Minneapolis, MN, USA, in 2002 and 1998, respectively. She is currently a Full Professor with the Department of Computer Science, The University of Texas at Dallas, Richardson, TX, USA. Her research mainly deals in the general research area of data communication and data management. Her research focuses on the design and analysis of algorithms for optimization problems that occur in wireless networking environments and various database systems.
\end{IEEEbiography}



\vfill


\end{document}